\let\olddesc\description
\def\description{\olddesc\setlist[itemize]{leftmargin=*,labelindent=-12pt}}
\newtcolorbox[auto counter]{examplebox}[2][]{%
title=Box~\thetcbcounter: #2,#1}
\newtheorem{theorem}{Theorem}
\newcommand{\ket}[1]{\left| #1 \right>} 
\newcommand{\bra}[1]{\left< #1 \right|} 
\let\epsilon\relax
\def\epsilon{\varepsilon}
\let\phi\relax
\def\phi{\varphi}
\def\R{\mathbb{R}}
\def\id{\mathbbm 1}
\begin{document}

\title{Operationally meaningful representations \\ of physical systems in neural networks}

\author{Hendrik Poulsen Nautrup\**\,}
\email{hendrik.poulsen-nautrup@uibk.ac.at}
\affiliation{Institute for Theoretical Physics, University of Innsbruck, Technikerstr. 21a, A-6020 Innsbruck, Austria}

\author{Tony Metger\**\,}
\email{tmetger@ethz.ch}
\affiliation{Institute for Theoretical Physics, ETH Z{\"u}rich, 8093 Z{\"u}rich, Switzerland}

\author{Raban Iten\**\,}
\affiliation{Institute for Theoretical Physics, ETH Z{\"u}rich, 8093 Z{\"u}rich, Switzerland}

\author{Sofiene Jerbi}
\author{Lea M. Trenkwalder}
\affiliation{Institute for Theoretical Physics, University of Innsbruck, Technikerstr. 21a, A-6020 Innsbruck, Austria}

\author{Henrik Wilming}
\affiliation{Institute for Theoretical Physics, ETH Z{\"u}rich, 8093 Z{\"u}rich, Switzerland}

\author{Hans J. Briegel}
\affiliation{Institute for Theoretical Physics, University of Innsbruck, Technikerstr. 21a, A-6020 Innsbruck, Austria}
\affiliation{Department of Philosophy, University of Konstanz, 78457 Konstanz, Germany}

\author{Renato Renner}
\affiliation{Institute for Theoretical Physics, ETH Z{\"u}rich, 8093 Z{\"u}rich, Switzerland}

\date{}

\begin{abstract} 

\noindent 
To make progress in science, we often build abstract representations of physical systems that meaningfully encode information about the systems.
The representations learnt by most current machine learning techniques reflect statistical structure present in the training data;
however, these methods do not allow us to specify explicit and operationally meaningful requirements on the representation.
Here, we present a neural network architecture based on the notion that agents dealing with different aspects of a physical system should be able to communicate relevant information as efficiently as possible to one another. This produces representations that separate different parameters which are useful for making statements about the physical system in different experimental settings.
We present examples involving both classical and quantum physics.
For instance, our architecture finds a compact representation of an arbitrary two-qubit system that separates local parameters from parameters describing quantum correlations.
We further show that this method can be combined with reinforcement learning to enable representation learning within interactive scenarios where agents need to explore experimental settings to identify relevant variables.
\end{abstract} 
 
\maketitle

\section{Introduction}

Neural networks are among the most versatile and successful tools in machine learning~\cite{nielsenneural,lecun_deep_2015,silver_mastering_2016} and have been applied to a wide variety of problems in physics (see~\cite{dunjko_machine_2018,roscher_explainable_2019,carleo_machine_2019} for recent reviews). Many of the earlier applications have focused on solving specific problems that are intractable analytically and for which conventional numerical methods deliver only unsatisfactory results. Conversely, neural networks may also lead to new insights into how the human brain develops physical intuition from observations~\cite{Bates_humans_2015,Wu_galileo_2015, Bramley2018,rempe_learning_2019,kissner_adding_2019,ehrhardt_unsupervised_2018,ye_interpretable_2018,zheng_unsupervised_2018}.   

Recently, the potential role that machine learning might play in the scientific discovery process has received increasing attention~\cite{raban_2018_discovering,melnikov_active_2018,ried_2019_how,briegel_2012_on, wu_toward_2018,de_simone_guiding_2019, dagnolo_learning_2019,rahaman_2019_learning}.
This direction of research is not only concerned with machine learning as a useful numerical tool for solving hard problems, but also seeks ways to establish artificial intelligence methodologies as general-purpose tools for scientific research. This is motivated from various directions: from an artificial intelligence perspective, having machines autonomously discover scientific concepts about the world is often seen as an important step towards artificial general intelligence~\cite{Lake2016}; from the perspective of science, machine learning might complement human scientific research to both speed up scientific discovery and make it less susceptible to human biases.

An important step in the scientific process is to convert experimental data, which can be seen as a very high-dimensional and noisy representation of a physical system, to a more succinct representation that is amenable to a theoretical treatment. For example, when we observe the trajectory of an object, the natural experiment is to record the position of the object at different times; however, our theories of kinematics do not use time series of positions as variables, but rather describe the system using quantities, or \emph{parameters}, such as velocity and initial position.
Concepts such as velocity are more versatile because they can be used in different ways for making predictions in many different physical settings.

\begin{figure*}[ht!] 
\centering
\includegraphics[width=0.75\textwidth]{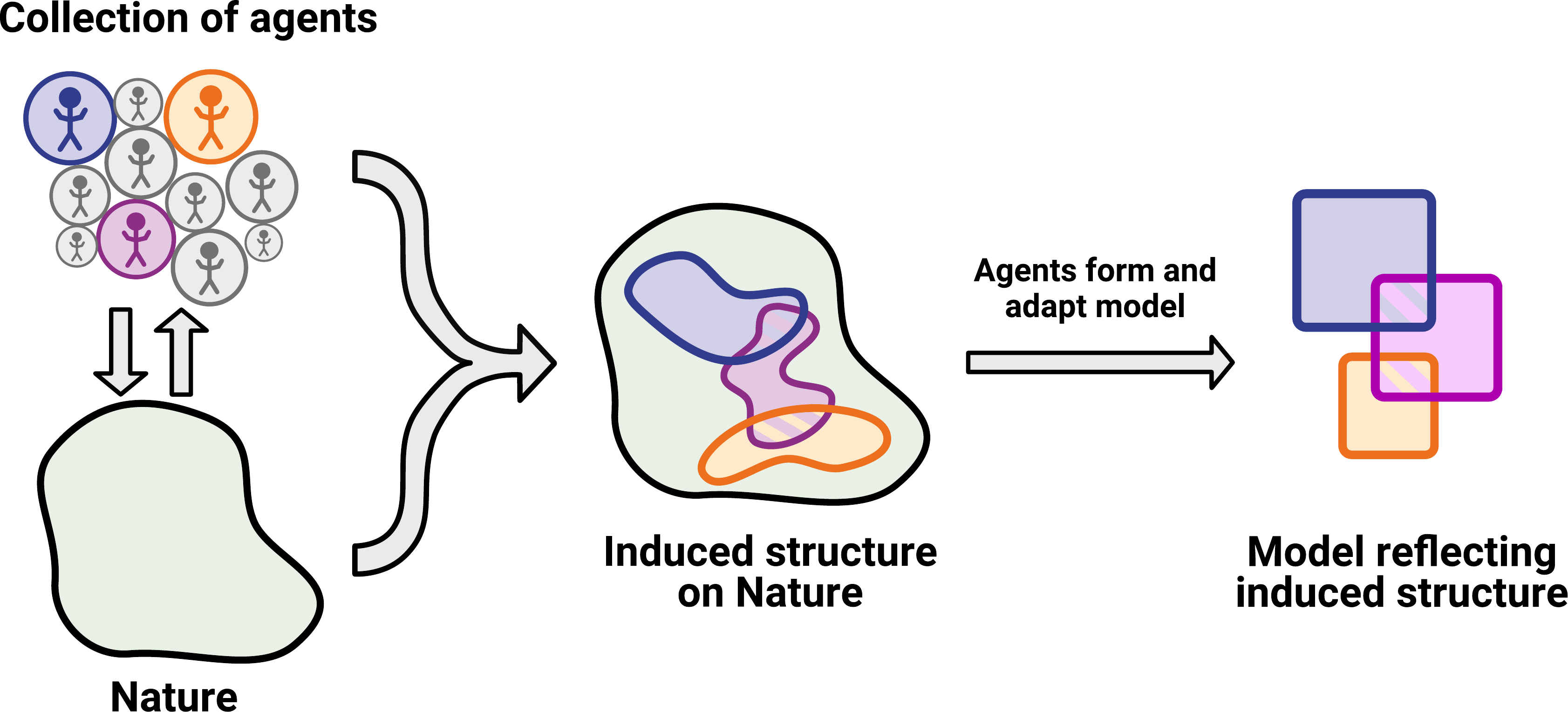} 
\caption{ 
\textbf{Conceptual overview.} 
Various agents interact with different aspects of nature through experiments. We assume that different agents only deal with parts of the experimental settings to achieve their objective. This induces structure on our description of Nature. When the agents build a model of Nature, i.e., learn to parameterise their experimental settings, we want this model to reflect the structure enforced by the requirement that agents compress and communicate parameters in the most efficient way.
}  
\label{fig:overview}
\end{figure*}

When using neural networks to find such parameterisations, one encounters the limitation of standard techniques from representation learning~\cite{Bengio2012,hinton_reducing_2006,Higgins2017},  an area of machine learning devoted to problems of this type. 
With these standard techniques, we are typically not able to specify explicit criteria on the parameterisation, such as which aspects of a system should be stored in distinct parameters. 
Instead, a separation or \emph{disentanglement} typically arises implicitly from the statistical distribution of the training data set. 
This works well for many practical problems~\cite{Higgins2017}; however, for scientific applications, it is desirable that different parameters in the representation are relevant for different experiments one can actually perform on the system.  
Otherwise it is likely that our model reflects biases we implicitly, and likely unknowingly, had in collecting the experimental data.
In the following we will call a representation that fulfills this desideratum an operationally meaningful representation.

Naturally, formulating operationally meaningful requirements for a representation and translating them to a neural network implementation depends heavily on the specific scenario one is interested in. In this work, we consider a scenario which is particularly relevant in the context of scientific discovery. Specifically, we impose structure on the parameterisation of experimental data\footnote{We consider a specific notion of measurement which we elaborate on in Sec.~\ref{sec:formal_setting}.} by assuming that different agents, which deal with different sets of questions, each only require knowledge of a subset of the parameters to successfully answer any specific question from their respective set of questions.
For instance, two agents may each have to predict the movement of a charged particle in the presence of an electrostatic field: the field's strength is relevant for both agents, whereas the individual parameters of each particle, such as charge, mass, etc., are only relevant for one agent; or many agents have to make sequences of operations to answer whether (and if so, how) various phenomena -- such as high-dimensional entanglement~\cite{melnikov_active_2018} -- can be generated in an experiment.

More generally, the criterion for imposing structure can be understood in terms of communicating agents.\footnote{Throughout this paper, we use the term \emph{agent} in a generic sense~\cite{RusselNorvig2003} and do not specifically mean reinforcement learning agents.}  An ensemble of agents would like to predict the results of various experimental settings. However, only one agent \textit{A} has access to reference data from the experimental setting (which is the high-dimensional full representation of the experimental setting). 
This agent has to identify and communicate the relevant parameters of the system to the other agents, each of whom only requires partial information to solve their question. 
Agent \textit{A} therefore splits the parameters in such a way that the remaining agents can share them optimally, in the sense that each agent requires the smallest possible subset of parameters and that parameters required by multiple agents are shared without redundancies. We formalise this notion in Sec.~\ref{sec:formal_setting}.  

In this work, we introduce a network architecture that allows us to explicitly impose the aforementioned operational criterion on the parameters used by a neural network to represent a physical system~\cite{raban_2018_discovering}. The model architecture is detailed in Sec.~\ref{sec:model}. 
In Sec.~\ref{sec:experiments} we provide two illustrative examples of a scenario in which agents are given (a high-dimensional representation of) an experimental setting and are required to make a prediction w.r.t. a specific question. In an example from classical mechanics, the network autonomously distinguishes parameters that are only relevant to predict the behaviour of an individual particle from parameters that affect the interaction between particles. 
This structure arises naturally in an experiment with multiple charged particles, when different agents have to predict the motion of \emph{their} charged particle in the presence or absence of the other agents' charged particles. The method is agnostic to the theory underlying an experiment and can thus also be applied to quantum mechanical experiments. We illustrate this by learning a representation of a two-qubit system that separates parameters relevant for two \textit{individual} qubits from those parameters describing the quantum correlations between qubits. In fact, this parameterisation is similar to the standard, analytic representation described in Refs.~\cite{Gamel_2016,garon_2015_visualizing}. 

In Sec.~\ref{sec:reinforcement_learning}, we consider scenarios where the answer to a specific question can be described as a sequence of actions; such a sequence either does or does not achieve a specific goal, i.e., feedback about the quality of an action may be discrete and delayed. 
For instance, this tends to be the case for optimisation problems such as the design and control of complex systems or the development of gadgets and software solutions for different scientific and technological purposes~\cite{patnaik_2013_modeling}. 
In the context of scientific discovery, the specific goal may be to build experimental settings which bring about a specific phenomenon, e.g., entanglement~\cite{melnikov_active_2018}.
In such a scenario, we may first explore the space of experimental settings and learn solutions through reinforcement learning~\cite{sutton_1998_reinforcement} before applying the criterion of minimal communication to impose structure on the parameterisation of experimental data. Therefore, we provide a formal description of reinforcement learning environments where our architecture may capture operationally meaningful structure, and demonstrate this by means of an illustrative example.

\section{Related work} \label{sec:related_work}

The field of representation learning is concerned with feature detection in raw data.
While, in principle, all deep neural network architectures learn some representation within their hidden layers, most work in representation learning is dedicated to defining and finding \textit{good} representations~\cite{Bengio2012}. 
A desirable feature of such representations is the interpretability of its parameters (stored in different neurons in a neural network). 
Standard autoencoders, for instance, are neural networks which compress data during the learning process. In the resulting representation, different parameters in the representation are often highly correlated and do not have a straightforward interpretation.
A lot of work in representation learning has recently been devoted to \emph{disentangling} such representations in a meaningful way (see e.g. \cite{Higgins2017,chen_2018_isolating,Kim2018,thomas_2018_disentangling,
francois_lavet_combined_2018}). 
In particular, these works introduce criteria, also referred to as \textit{priors} in representation learning, by which we can disentangle representations. 

\textbf{$\beta$-variational autoencoders.}
Autoencoders are one particular architecture used in the field of representation learning, whose goal is to map a high-dimensional input vector $x$ to a lower-dimensional latent vector $z$ using an encoding mapping $E(x)=z$. For autoencoders, $z$ should still contain all information about $x$, i.e., it should be possible to reconstruct the input vector $x$ by applying a decoding function $D$ to $z$. The encoder $E$ and the decoder $D$ can be implemented using neural networks and trained unsupervised by requiring $D(E(x))=x$.
$\beta$-variational autoencoders ($\beta$-VAEs) are autoencoders where the encoding is regularised in order to capture statistically independent features of the input data in separate parameters~\cite{Higgins2017}.  

In Ref.~\cite{raban_2018_discovering} a modified $\beta$-VAE, called \emph{SciNet}, was used to answer questions about a physical system. The criterion by which the latent representation is disentangled is statistical independence equivalent to standard $\beta$-VAE methods. In the present work, we use a similar architecture but impose an operational criterion in terms of communicating agents for the disentanglement of parameters.

Another prior that was recently proposed to disentangle a latent representation is the \textit{consciousness prior}~\cite{bengio_2017_consciousness}.  There, the author suggests to disentangle abstract representations via an attention mechanism by assuming that, at any given time, only a few internal features or concepts are sufficient to make a useful statement about reality.

\textbf{State Representation Learning.}
State representation learning (SRL) is a branch of representation learning for interactive problems~\cite{lesort_2018_state}. 
For instance, in reinforcement learning~\cite{sutton_1998_reinforcement} it can be used to capture the variation in an environment created by an agent's action~\cite{bengio_2017_independently,thomas_2018_disentangling,francois_lavet_combined_2018, jonschkowski_2015_learning}. 
In Ref.~\cite{thomas_2018_disentangling}  the representation is disentangled by an \textit{independence prior} which encourages that  independently controllable features of the environment are stored in separate parameters. 
A similar approach was recently introduced in Ref.~\cite{francois_lavet_combined_2018} where model-based and model-free reinforcement learning are combined to jointly infer a sufficient representation of the environment. 
The abstract representation becomes expressive by introducing \textit{representation} and \textit{interpretability priors}. Similarly, in Ref.~\cite{jonschkowski_2015_learning} \textit{robotic priors} are introduced to impose a structure reflecting the changes that occur in the world and in the way a robot can interact with it.  
As shown in Ref.~\cite{francois_lavet_combined_2018} and~\cite{jonschkowski_2015_learning}, such requirements can lead to very natural representations in certain scenarios such as creating an abstract representation of a labyrinth or other navigation tasks. 

In Ref.~\cite{jaderberg_2017_reinforcement} many reinforcement learning agents with different tasks share a common representation which is being developed during training. They demonstrate that learning auxiliary tasks can help agents to improve learning of the overall objective. One important auxiliary task is given by a \textit{feature control prior} where the goal is to maximise the activations of hidden neurons in an agent's neural network as they may represent task-relevant high-level features~\cite{mnih_2015_human, zahavy_2016_graying}. However, this representation is not expressive or interpretable to the human eye since there is no criterion for disentanglement.

\textbf{Projective Simulation}
The projective simulation (PS) model for artificial intelligence~\cite{briegel_2012_projective} is a model for agency which employs a specific form of an episodic and compositional memory to make decisions. It has found applications in various areas of science, from quantum physics~\cite{melnikov_active_2018,nautrup_2018_optimizing,wallnofer_2019_machine} to robotics~\cite{hangl_2016_robotic,hangl_2017_skill} and the modelling of animal behaviour~\cite{ried_2019_modelling}. Its memory consists of a network of so-called \emph{clips} which can represent basic episodic experiences as well as abstract concepts. Besides the usage for generalisation~\cite{melnikov_2017_projective,falmini_2019_photonic}, these clip networks have already been used to represent abstract concepts in specific settings~\cite{ried_2019_how,hangl_2017_skill}. 
In Ref.~\cite{ried_2019_how}, PS was used to infer the existence of unobserved variables such as mass, charge or size which make an object respond in certain experimental settings in different ways. In this context, the authors point out the significance of exploration when considering the design of experiments, and thereby adopt the notion of reinforcement learning similar to Ref.~\cite{melnikov_active_2018}. In line with previous works, we will also discuss reinforcement learning methods for the design of experimental settings. Unlike previous works however, we provide an interpretation and formal description of decision processes which are specifically amenable to representation learning. Moreover, we employ neural networks architectures to infer continuous parameters from experimental data. In contrast, PS is inherently discrete and therefore better suited to infer high-level concepts.
 
In this work, we suggest to disentangle a latent representation of a neural network according to an operationally meaningful principle, by which agents should communicate as efficiently as possible to share relevant information to solve their tasks. Technically, we disentangle the representation according to different questions or tasks, as described in more detail in the following section.

\section{Formal setting}\label{sec:formal_setting}
\begin{figure*}[ht!] 
\centering
  \begin{subfigure}{0.35\textwidth}
  \centering
  \vspace{0.6cm}
  \includegraphics[width=\textwidth]{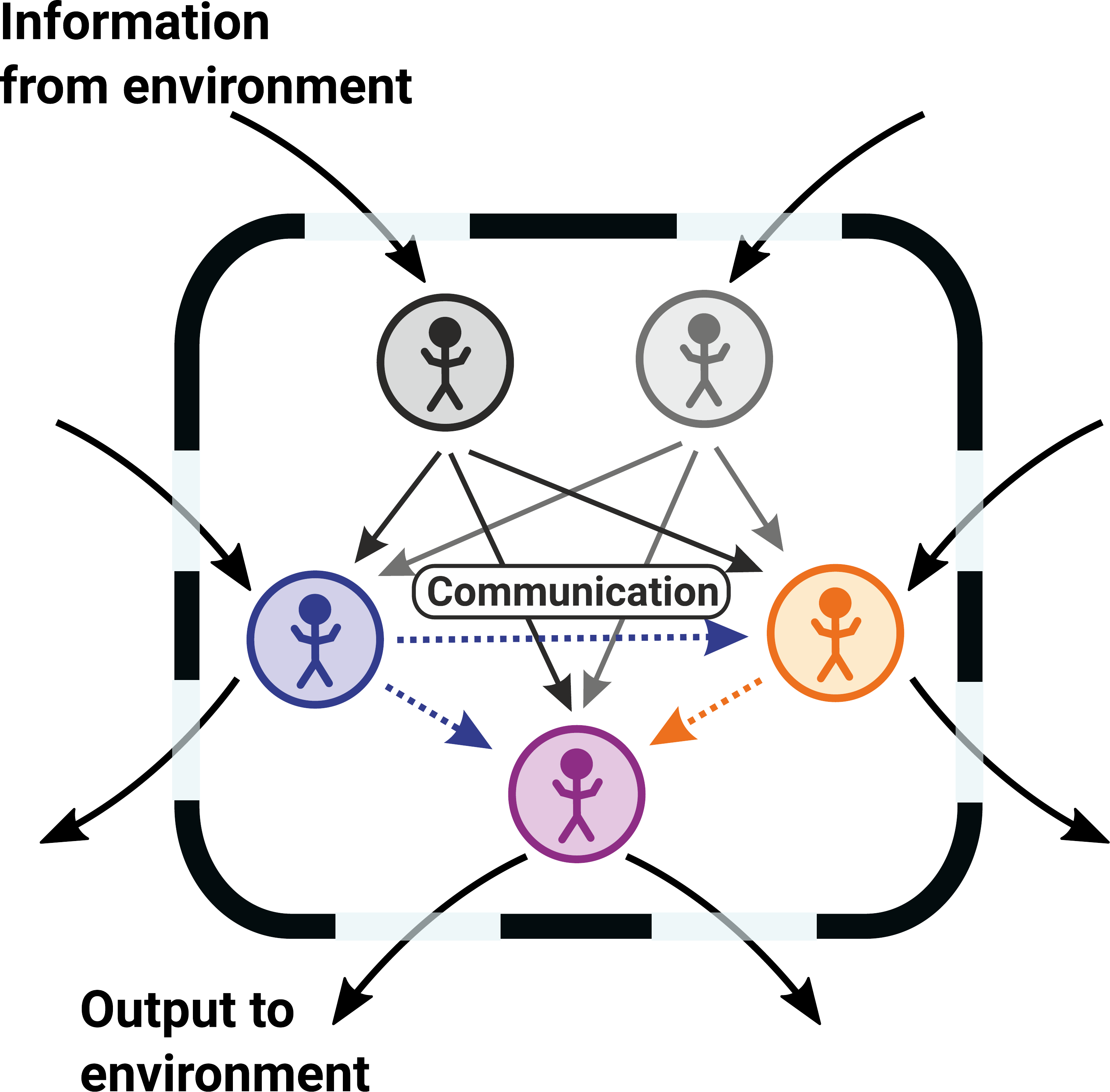} 
  \vspace{0.6cm}
  \caption{ ~ }  
\end{subfigure}
\hspace{0.05\textwidth}
  \begin{subfigure}{0.55\textwidth}
  \centering
  \includegraphics[width=\textwidth]{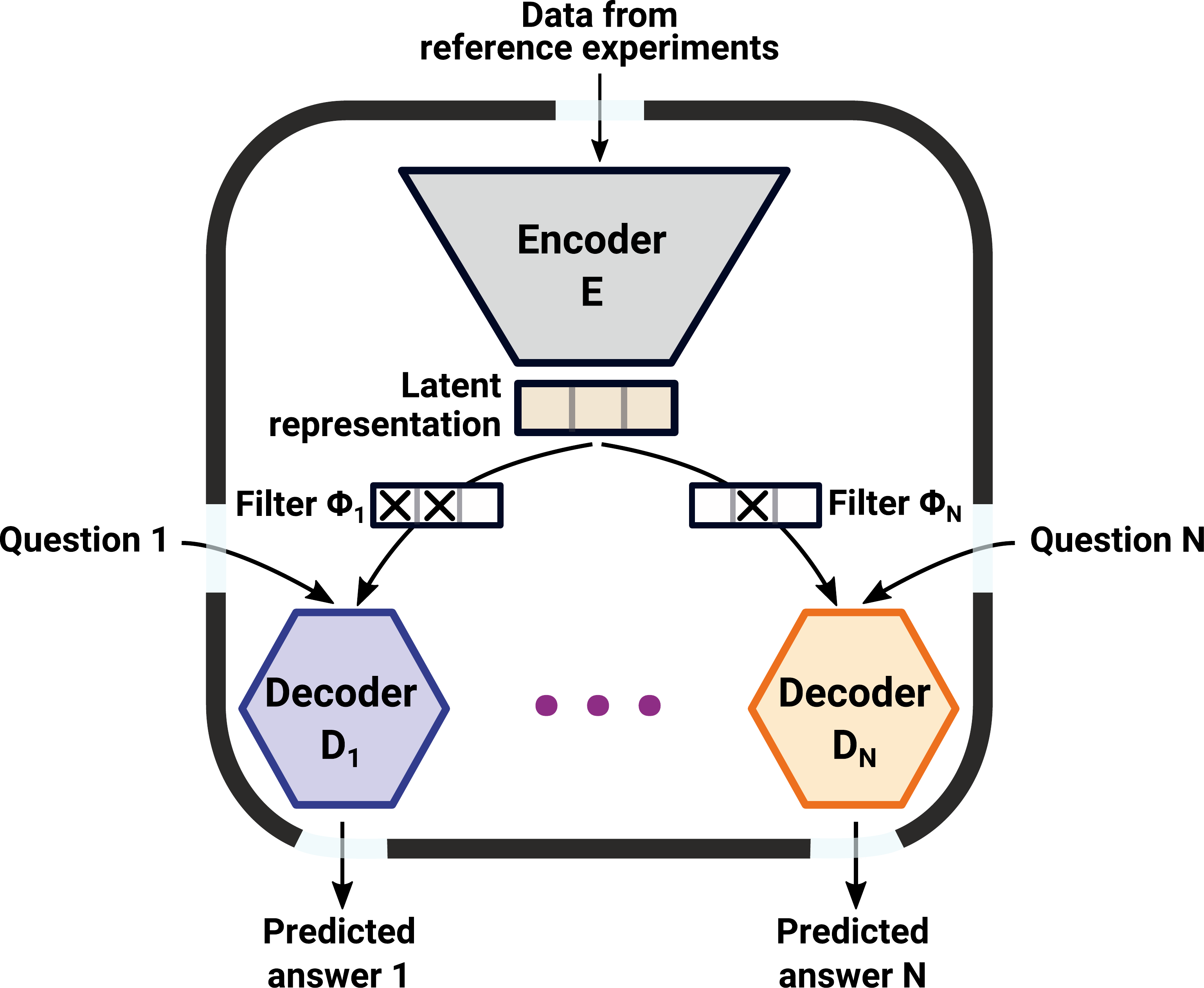} 
  \caption{ ~ }  
\end{subfigure}
\caption{\textbf{Communicating agents and an implementation with neural networks.}
(a) In the generic communication setting that we consider, agents have access to information (e.g., observations) about the environment, which can be considered as a physical system, and can interact with part of it (e.g., by making predictions). Different agents may observe or interact with different subsystems,  but might not have access to other parts of the environment. To solve certain tasks, agents may require information that can only be accessed by other agents. Therefore, the agents have to communicate. To this end, they encode their information into a representation that can be communicated efficiently to other agents, i.e., they have to find an efficient ``language" to share relevant information. Arrows passing through boundaries represent interactions with the physical system in form of data gathering or predictions. Arrows between agents suggest possible communication channels. 
(b) In the specific settings that we consider, an \emph{encoding} agent maps an observation (given as a sample) obtained from the current experimental setting onto a latent representation, part of which has to be communicated to \emph{decoding} agents.  
Decoding agents receive additional information specifying the question which they are required to answer; for example, the question may just define the problem setting. In our architecture, we thus view the question as coming from the environment, but in general, it could also include information from other agents. The functions $E, \phi_i, D_i$, representing encoder, filter, and decoder respectively, are each implemented as neural networks. To answer a given question, each decoder receives the part of the representation that is transmitted by its filter. The cost function is designed to minimise the error of the answer and the amount of information that is being transmitted from the encoder to decoders, which can be seen as minimising the amount of parameters that have to be communicated between agents.}
\label{fig:abstract_network_structure}
\end{figure*}
\begin{figure*}[ht!] 
\centering
\includegraphics[width=0.6\textwidth]{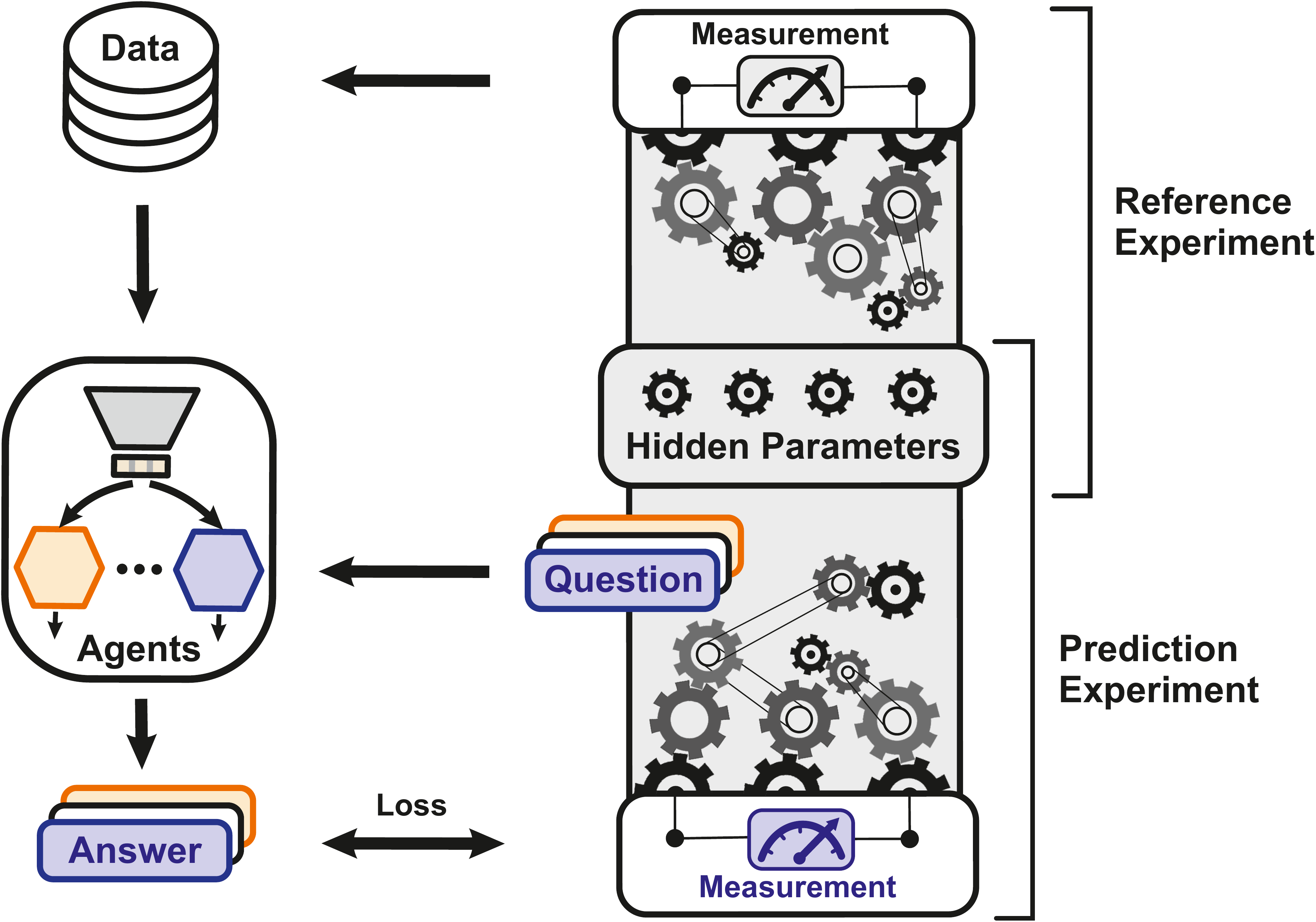} 
\caption{ 
\textbf{Experimental settings for representation learning.} Here, we illustrate the type of experiments that is used to provide data to agents. In our notion of experiments, experimental measurement results are governed by (hidden) parameters depicted as gears. When designing experiments (pictured as a complex network of gears) for the specific examples in Sec.~\ref{sec:experiments}, we start with \emph{reference} experiments whose measurement results (data) comprise a high-dimensional encoding of the hidden parameters. This data is referenced by other agents to make predictions, hence the name. With the encoding and additional information given as a question, our architecture can produce predictions (or answers) about the results of similar experimental settings, dubbed \emph{prediction} experiments.  
Insofar as different agents answer different questions, their prediction experiments may yield distinct outcomes. Those parts of an experiment that are associated with a (coloured) agent are depicted by the same colouring.
In our setting, questions may influence the prediction experiment in various ways. For instance, we consider experiments where questions are  encoded by specifying additional parameters of the experiment given to an agent as low- or high-dimensional representation (see Sec.~\ref{sec:experiments}). We also study the minimal case where questions are constant and just distinguish the tasks of different agents in Sec.~\ref{sec:reinforcement_learning}.
}
\label{fig:experiments_def}
\end{figure*}

Our setting is inspired by the idea that the physically meaningful parameters are those which are useful for answering different questions about or solving different tasks related to the same physical system. For instance, we use the parameters mass $m$ and charge $q$ to describe a particle because there exist operationally meaningful questions about the particle in which only the mass or only the charge is relevant, and other questions for which both are required. If we stored $m + q$ and $m - q$ instead (in some fixed units), we would still have the same information, but to answer a question just involving the mass, we would need both parameters instead of one; in contrast, there are few, if any, operationally meaningful questions for which only $m + q$ is relevant. Therefore, we say that $m$ and $q$ are operationally meaningful parameters, whereas $m + q$ and $m - q$ are not.
Note that we assume a specific notion of experiments in this paper and will continue to do so implicitly in the following. 
Here we understand an  \emph{experiment} as a stochastic function which maps a space of input parameters onto an output space representing measurement data such that the output distribution is reproducible for fixed parameters.
An \emph{experimental setting} is then an instance of an experiment with specified parameters. 
We assume that we can sample many different experimental settings, i.e., sample many instance of the same experiment with different parameters. 
For example, in an experiment involving a mass, we can sample many experimental settings with different (but possibly unknown) values for this mass.

\subsection{Communicating agents}

Here, we consider the following generic setting (see Fig.~\ref{fig:abstract_network_structure}(a)). 
Various agents have access to a physical system in form of e.g., measurement data. However, not all agents have access to the same data and some agents need to communicate with each other in order to answer a question or solve a task within this physical system. The constraint that agents need to communicate efficiently imposes structure on the representation of the communicated data. 
In the simplest case, a single \emph{encoding} agent \textit{A} makes an observation $o \in \mathcal{O}$ on a physical system with randomly chosen unknown parameters and generates a parameterised representation $r\in\mathcal{R}$, where $\mathcal{O}$ is the set of possible observations and $\mathcal{R}$ is a representational parameter space. 
For instance, the agent could observe a time series of particle positions and represent the velocity parameter. 
Other, \emph{decoding} agents $B_1, \dots, B_k$ are given questions $q_1, \dots, q_k$ randomly sampled from $\mathcal{Q}_1, \dots, \mathcal{Q}_k$, respectively, and are required to produce an answer $a_i(o,q_i)$.
For now, we assume that both the observation $o$ and the optimal answer $a^{*}_{i}(o, q_i) \in \mathcal{A}_i$ may be obtained directly from the respective experimental setting. 
In Sec.~\ref{sec:reinforcement_learning}, we consider the case where the optimal answer is not immediately apparent from an observation $o$ but may be learnt through reinforcement learning. Formally, one data sample consists of $\Big(o, \big(q_1, \dots, q_k\big), \big(a^{*}_{1}(o, q_1), \dots, a^{*}_{k}(o, q_k) \big) \Big)$. To generate the training data set, we collect such samples for many configurations of the unknown parameters of the physical system and many randomly chosen questions. 
By contrast, in Sec.~\ref{sec:reinforcement_learning} the training data is effectively generated by a trained reinforcement learning agent. 
In practice, we can represent observations, questions and answers as tuples of real numbers, and we will do so implicitly for the rest of this paper. 
Instead of having access to the entire observation $o$, $B_1, \dots, B_k$ only receive (part of) the encoding $r$. 
That is, $A$ is required to \textit{communicate} part of its representation to the other agents such that they can solve their respective tasks optimally.

The values $(o, q_i, a^{*}_i)$ are related to our notion of experiments (see Fig.~\ref{fig:experiments_def}) in the following way. An experiment $\mathcal{E} = (\mathcal{E}_1, \dots, \mathcal{E}_k)$ maps a parameter space $\Phi$ and a question space $\mathcal{Q}_i$ onto a result space $\mathcal{A}_i$, i.e. $\mathcal{E}_i:\Phi \times\mathcal{Q}_i \to \mathcal{A}_i$. 
The parameters in $\Phi$ may be (partially) hidden or not directly observable, and we would like to learn a representation for them. Therefore, we construct a reference experiment $\mathcal{E}_{r}$ which can be used to generate measurement data $o$ as a high-dimensional representation of the hidden parameters, i.e., $\mathcal{E}_{r}: \Phi \to \mathcal{O}$. This experiment is labeled the \emph{reference} because it provides the data which is used, or referenced, in parts by the other agents to make predictions.
Questions may be considered as additional parameters of the experiment of which we do not seek to find a representation, but which are useful for finding meaningful representations of the parameters in $\Phi$. Given the hidden parameters (encoded in $o\in\mathcal{O}$) and question (encoded in $q_i\in\mathcal{Q}_i$), the experiment produces some results $a^{*}_i\in\mathcal{A}_i$ which may be used to evaluate the answer given by an agent $B_i$.

\subsection{Learning objectives}
The operational criteria or learning objectives imposing structure on a representation take the form of different \emph{losses}, which are often referred to as \emph{priors} in representation learning~\cite{bengio_2017_consciousness,jonschkowski_2015_learning,francois_lavet_combined_2018}. In our case, the representation is generated under two criteria:
\begin{itemize}
\item With a \textit{prediction loss} we impose that agents need to learn to answer their questions as accurately as possible, given (part of) the representation. 
\item With a \textit{communication loss}, we impose that agents have to share the representation in the most data-efficient way.
\end{itemize}
In other words, the objective of the ensemble of agents $A, B_1, \dots, B_k$ is to correctly answer as many questions as possible, while also minimising the communication between $A$ and the other agents. Therefore, $A$ needs to disentangle its representation in a way that allows it to communicate the relevant parameters. More formally, we specify the encoding agent $A$ by a function $E: \mathcal{O} \to \mathcal{R} \equiv \R^l$ for some $l$ (see Fig~\ref{fig:abstract_network_structure}). 
This function can be thought of as an encoding from the high-dimensional experimental observation $o$ to a lower-dimensional vector of physically relevant parameters. 
In representation learning, the output of this function is called the representation.
Each decoding agent $B_i$ is specified by a \emph{filter} $\phi_i: \R^l \to \R^{l_i}$ and a function $D_i: R^{l_i} \times \mathcal{Q}_i \to \mathcal{A}_i$ such that the answer produced by the agent given an observation $o$ and question $q_i$ is $a_i(o, q_i) = D_i\Big(\phi_i\big(E(o)\big), q_i\Big)$ (see Fig~\ref{fig:abstract_network_structure}). 
The filter effectively restricts the agent's access to the representation by only transmitting a part of the representation; formally, $\phi_i(r_1, \dots, r_l) = (r_{j_1}, \dots, r_{j_{l_i}})$. 
We call $l_i$ the dimension $\dim(\phi_i)$ of the filter. Intuitively, one may imagine that the dimension and the indices $j_1, \dots, j_{l_i}$ of the transmitted components can be chosen by the agent. 
It is important that the filter is independent of the observation and question, since the transmission of parameters to agents should not depend on a particular data sample, but is instead viewed as a property of the \textit{theory} that applies to all data samples equally.  
The function $D_i$, called a decoder, takes the transmitted part of the representation and the question and produces an answer.
Ideally, agent $A$ produces a representation which allows each agent $B_i$ to answer its questions correctly while only accessing the smallest-possible part of the representation.

\subsection{Multiple encoding agents \label{sec:mult_enc}}
Up to now, we have assumed that there exists one agent \textit{A} who has access to the entire system to make an observation and to communicate its representation. However, just as different decoding agents $B_i$ only deal with a part of the system, we can consider the more general scenario of having multiple encoding agents $A_1, \dots, A_j$. In this scenario, each agent $A_i$ makes different measurements on the system. For example, one agent might make a collision experiment between two particles, while another observes the trajectory of a particle in an external field. Here, only the aggregate observations of all agents $A_1, \dots, A_j$ provide sufficient information about the system required for the agents $B_1, \dots, B_k$ to make predictions.

The formalisation is analogous to the previous section and we only sketch it here: we associate to each agent $A_i$ an encoder function $E_i$. The domain of the filter functions of the agents $B_1, \dots, B_k$ is now a cartesian product of the output spaces of the encoders (i.e., the output vectors of the encoders are concatenated and used as inputs to the filters). 

In the case where a physical system has an operationally natural division into $k$ interacting subsystems, a typical case would be to have the same number of encoding agents $A_1, \dots, A_k$ as decoding agents $B_1, \dots, B_k$, where both $A_i$ and $B_i$ act on the same $i$-th subsystem. Here, we expect that $A_i$ and $B_i$ are highly correlated, i.e., the filter for $B_i$ transmits almost all information from $A_i$, but less from other agents $A_j$. In this case, one can intuitively think of a single agent per subsystem $i$, that first makes an observation about that subsystem, then communicates with the other agents to account for the interaction between subsystems, and uses the information obtained from the communication to make a prediction about subsystem $i$.

\section{Model implementation}\label{sec:model}

Here, we discuss the details of the implementation and training of $E, \phi_i$ and $D_i$ (see Fig~\ref{fig:abstract_network_structure}). For brevity, we consider the case of a single encoding agent. The implementation of the multi-encoder scenario is analogous.
The functions $E, \phi_i, D_i$ are each implemented as neural networks. The encoder and decoder functions of the agents can be easily implemented using fully connected deep neural networks analogously to the architecture from Ref.~\cite{raban_2018_discovering}. To be more precise, the encoder is simply a deep neural network that maps a high-dimensional input to a low dimensional output consisting of a few so-called latent neurons. After being passed through the filter functions (which will be described in detail later) the representation is forwarded to all decoders. 
Additionally, each decoder receives a corresponding question vector as input. The decoder's neural network maps these to an output representing the answer.

While encoder and decoder are easy to implement, the implementation of filter functions $\phi_i$ poses a difficulty because these essentially need to learn a binary value, ``on'' or ``off'', for each of the latent neurons. Learning such discontinuous functions is not possible with standard backpropagation-based gradient descent. Therefore, we will need to introduce a smoothed version of this problem. However, we first need to understand the measure of success, i.e. the loss function that will be minimised.

\subsection{Learning objectives as loss functions}
As described above, the learning objective can be expressed in terms of loss or cost functions which are to be minimised by the ensemble of agents. The overall performance of the ensemble is quantified by a weighted sum of the following terms:
\begin{itemize}
\item Prediction losses $\mathcal{L}_{a, i} = ( a_i(o, q_i) - a^{*}_{i}(o, q_i))^2$ that measures how well the decoder answers the question.
\item A communication loss $\mathcal{L} _f = \sum_i \dim(\phi_i)$ that counts the total number of parameters transmitted to the agents $B_1, \dots, B_k$.
\end{itemize} 
In order to minimise the total cost, the neural network corresponding to the agent ensemble is then trained on a set of triples $\Big(o, \big(q_1, \dots, q_k\big), \big(a^{*}_{1}(o, q_1), \dots, a^{*}_{k}(o, q_k) \big) \Big)$. As described in the previous section, this data is provided in the form of measurement data obtained from various experimental settings. 

\subsection{Implementation of filters}

Due to the difficulty of implementing a binary value function with neural networks, we need to replace the ideal cost $\mathcal{L} _f$ by a comparable version with a smooth filter function.
To this end, instead of viewing the latent layer as the deterministic output of the encoder (the generalisation to multiple decoders is immediate), we consider each latent neuron $j$ as being sampled from a normal distribution $\mathcal{N}(\mu_j, \sigma_j)$. The sampling is performed using the renormalisation trick \cite{kingman_2013_auto}, which allows gradients to propagate through the sampling step. The encoder outputs the expectation values $\mu_j$ for all latent neurons. The logarithms of the standard deviations $\log(\sigma_j)$ are provided by neurons, which we call \emph{selection neurons}, that take no input and output a bias; the value of the bias can be modified during training using backpropagation. Using the logarithm of the standard deviation has the advantage that it can take any value, whereas the standard deviation itself is restricted to positive values. The ideal filter loss $\mathcal{L} _f = \sum_j \dim(\phi_j)$ is replaced by $\tilde{\mathcal{L} }_f = - \sum_j \log(\sigma_j)$.

The intuition for this scheme is as follows: when the network chooses $\sigma_j$ to be small (where the standard deviation of $\mu_j$ over the training set is used as normalisation), the decoder will usually obtain a sample that is close to the mean $\mu_j$; this corresponds to the filter transmitting this value. In contrast, for a large value of $\sigma_j$, a sample from $\mathcal{N}(\mu_j, \sigma_j)$ is usually far from the mean $\mu_j$; this corresponds to the filter blocking this value. The loss $\tilde{\mathcal{L} }_f$ is minimised when many of the $\sigma_j$ are large, i.e., when the filter blocks many values.

Instead of thinking of probability distributions, one can also view this scheme as adding noise to the latent variables, with $\sigma_j$ specifying the amount of noise added to the $j$-th latent neuron. If $\sigma_j$ is large, the noise effectively hides the value of this latent neuron, so the decoder cannot make use of it.

We also note that $\tilde{\mathcal{L} }_f$ is in principle unbounded. However, in practice this does not present a problem since the decoder can only approximately, but not perfectly, ignore the noisy latent neurons. For sufficiently large $\sigma_j$, the noise will therefore noticeably affect the decoders' predictions, and the additional loss incurred by worse predictions dominates the reduction in $\tilde{\mathcal{L} }_f$ obtained from larger values for $\sigma_j$. 

The success of this method to lead to an approximation of a binary filter depends on the weighting of the success loss in relation to the communication loss. This weight is a hyperparameter of the machine learning system.

\begin{figure*}[ht!]
  \centering
  \begin{subfigure}{0.4\textwidth}
  \centering
  \vspace{0.0cm}
  \includegraphics[width=\textwidth]{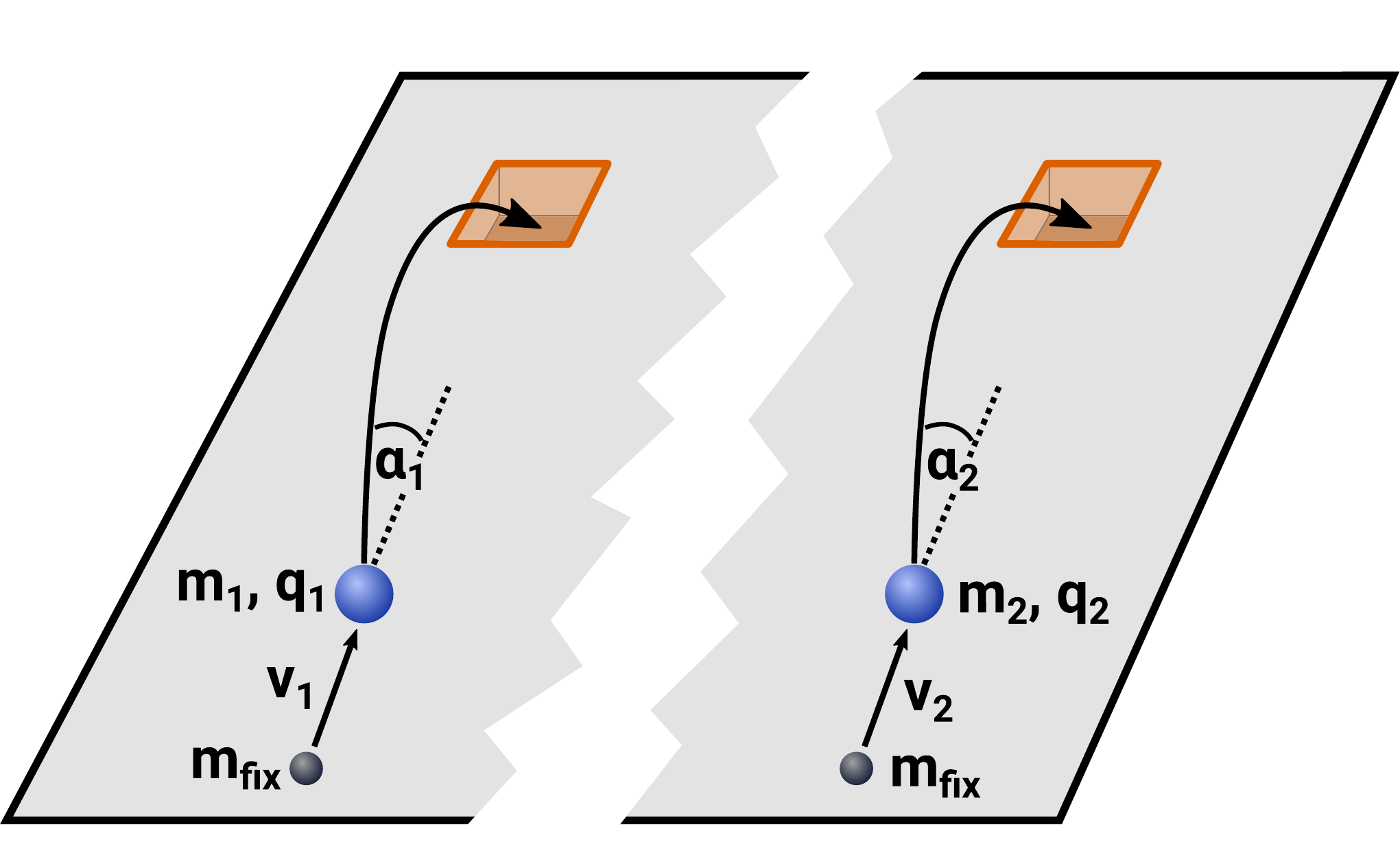} 
  \vspace{0.05cm}
  \caption{ ~ }  
\end{subfigure}
\quad\quad\quad
\begin{subfigure}{0.3\textwidth}
  \centering
  \includegraphics[width=\textwidth]{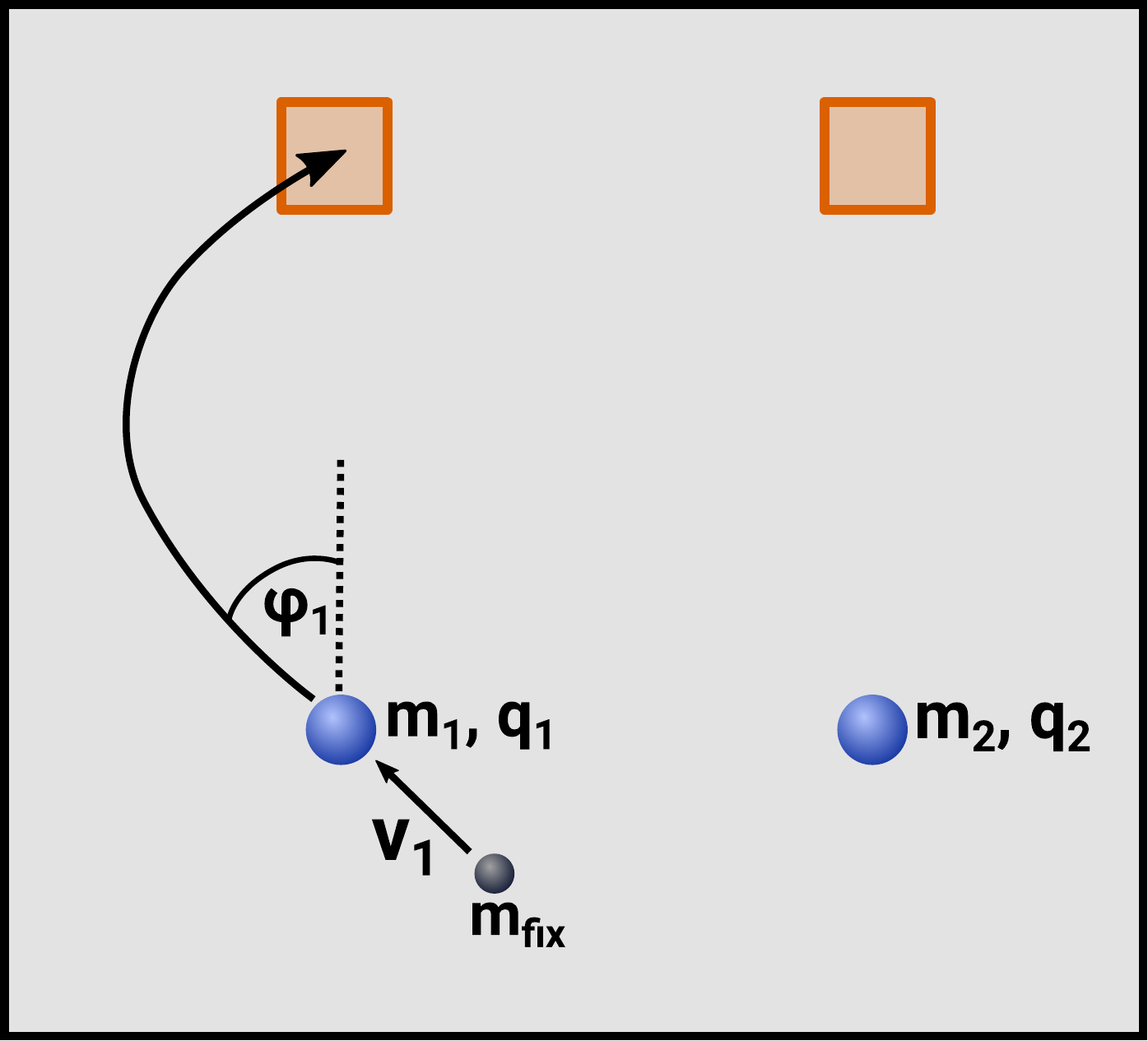}
  \caption{ ~ }
\end{subfigure}
\caption{\textbf{Toy example from classical mechanics with charged masses. }
(a) There are two separated decoding agents, so there is no interaction between their individual experimental setups. Each agent is required to \emph{shoot} a mass $m_i$ into a hole in the presence of a fixed gravitational field. They do this by elastically colliding a projectile of fixed mass $m_{\rm fix}$ with the mass $m_i$. The distance to the hole is fixed. The agents are given the velocity with which the projectile is shot as a question. The correct answer is the angle out of the plane of the table s.t. if the projectile is fired with the given velocity at this angle, the mass $m_i$ lands directly in the hole.
(b) Now we consider two decoding agents that are subject to the Coulomb interaction between their charged masses. Each agent is again required to shoot a projectile at its charged mass $(m_i, q_i)$. The charged mass will move in the Coulomb field of the other agent's charge. Similar to the first situation, each agent is given a velocity $v_i$ as a question and has to predict an angle (this time in the plane of the \emph{table}, i.e., gravity does not play a role) s.t. if the mass is fired at this angle, it will roll into the hole while the position of the other charge stays fixed. (The experiment is then repeated with the roles of the agents reversed, i.e., the agent that first fired his mass now fixes it at its starting position, and vice versa.)}
\label{fig:charged_setup}
\end{figure*}

\section{Examples with simple systems}\label{sec:experiments}

We demonstrate our method, both for single and multiple encoders, on two examples, one from classical mechanics, one from quantum mechanics. In all cases, the network finds a representation that complies with our operational requirements. 
We emphasise again that we refer to the term of experiment in order to describe a function mapping input parameters onto measurement data. An experimental setting is then an instance of an experiment with specified parameters and we assume access to a sampling method that produces experimental settings with varying parameters. In designing the following example experiments we follow the approach in Fig.~\ref{fig:experiments_def} specifying reference and prediction experiments.

\subsection{Charged masses} \label{sec:charged_masses}

\subsubsection{Setup}

We consider the setup shown in Fig.~\ref{fig:charged_setup}: take particles with masses $m_1, m_2$ and charges $q_1, q_2$, where both masses and charges are parameters that are varied between training examples. To generate the input data which is provided to the encoding agent $A$, we perform the following two \emph{reference experiments}:
\begin{enumerate}
\item We elastically collide each of the particles with masses $m_i$, initially at rest, with a reference mass $m_{\rm ref}$ moving at a fixed reference velocity $v_{\rm ref}$, and observe a time series of positions $(x_1, \dots, x_n)$ of the particle $m_i$ after the collision. In practice, we use $n=10$.
\item For each of the particles $(m_i, q_i)$, we place the particle at the origin at rest, and place a reference particle $(m_{\rm ref}, q_{\rm ref})$ with fixed mass and charge at a fixed distance $d_0$. Both particles are free to move. We  observe a time series of positions of the particle $(m_i, q_i)$ as it moves due to the Coulomb interaction between itself and the reference particle.
\end{enumerate}
Different agents now are required to answer different questions about the system in form of a prediction experiment (cf. Fig.~\ref{fig:experiments_def}). In this context, these questions can most easily be phrased as the agents trying to win games, both involving a target hole. The initial positions of the particles and the target holes are fixed.
\begin{itemize}
\item Agents $B_1$ and $B_2$ each are given projectiles with a fixed mass $m_{\rm fix}$. As question input, they are given the (variable) velocity $v_i$ with which this projectile will hit $m_i$. 
They can vary the angle $\alpha_i$ in the $yz$-plane with which they shoot this projectile against the mass $m_i$. After being hit, the mass will fly towards the target hole under the influence of gravity. The agent's goal is to hit the mass in precisely such a way that it lands directly in the hole, similar to a golfer attempting a lob shot that lands directly in the hole without bouncing. The prediction loss is given by the squared difference between the angle chosen by the agent and the \emph{correct} angle that would have landed the mass directly in the hole; this correct angle can be determined by experiments on the system. Alternatively, one could use the minimal distance of the trajectory of the particle to the hole as a cost function.
\item Similarly, agents $B_3$ and $B_4$ are given projectiles. The velocities of these projectiles are again given as a question input. The goal of the agent is to choose the angle $\phi_i$ in the $xy$-plane so that when the mass moves in the Coulomb field of the other mass (which stays fixed, then the experiment is repeated with the roles of moving and fixed mass reversed for the other agent), it will roll into the hole. 
\end{itemize}
In both cases, we restrict the velocities given as questions to ones where there actually exists a (unique) angle that makes the particle land in the hole.

\subsubsection{Results}

\begin{figure*}[ht!] 
\centering
\includegraphics[width=\textwidth]{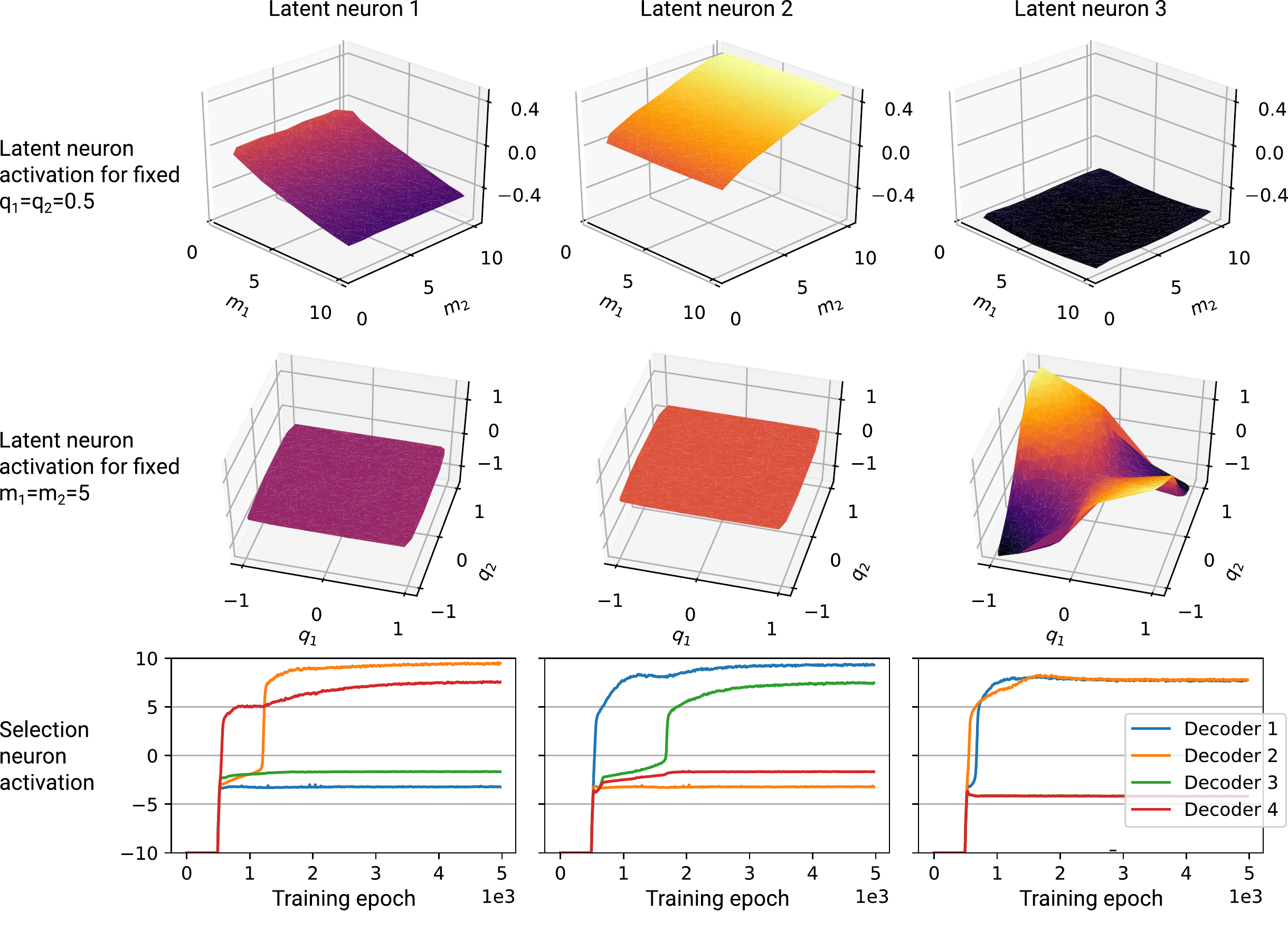}
\caption{\textbf{Results for the classical mechanics example with charged masses.} The used network has 3 latent neurons and each column of plots corresponds to one latent neuron. 
For the first row we generated input data with fixed charges $q_1=q_2=0.5$ and variable masses $m_1,m_2$ in order to plot the activation of latent neurons as a function of the masses. We observe that latent neuron 1 and 2 store the masses $m_1,m_2$ respectively while latent neuron 3 remains constant. In the second row, we plot the neurons' activation in response to $q_1,q_2$ with fixed masses $m_1, m_2=5$.  Here, the third latent neuron approximately stores  $q_1 \cdot q_2$, which is the relevant quantity for the Coulomb interaction while the other neurons are independent of the charges.
\\ The third row shows which decoder receives information from the respective latent neuron. Roughly, the $y$-axis quantifies how much information of the latent neuron is transmitted by the 4 filters to the associated decoder as a function of the training epoch. Positive values mean that the filter does not transmit any information. Decoders 1 and 2 perform non-interaction experiments with particles $(m_1,q_1)$ and $(m_2, q_2)$, respectively. Decoders 3 and 4 perform the corresponding interaction experiments. As expected, we observe that the information about $m_1$ (latent neuron 1) is received by decoders 1 and 3 and the information about $m_2$ (latent neuron 2)  is used by decoders 2 and 4.  Since decoders 3 and 4 answer questions about interaction experiments, the product of charges (latent neuron 3) is received only by them (the green line of decoder 3 in the last plot is hidden below the red one).
}   
\label{fig:charge_results}
\end{figure*}

To analyse the learnt representation, we plot the activation of the latent neurons for different examples with different (known) values of $m_1, m_2, q_1, q_2$ against those known values. This corresponds to comparing the learnt representation to a hypothesised representation that we might already have. The plots are shown in Fig.~\ref{fig:charge_results}. The first and second latent neurons are linear in $m_1$ and $m_2$, respectively, and independent of the charges; the third latent neuron has an activation that resembles the function $q_1 \cdot q_2$ and is independent of the masses. This means that the first and third latent neurons store the masses individually, as would be expected since the setup in Fig.~\ref{fig:charged_setup}(a) only requires individual masses and no charges. The third neuron roughly stores the product of the charges, i.e., the quantity relevant for the strength of the Coulomb interaction between the charges. This is used by the agents dealing with the setup in Fig.~\ref{fig:charged_setup}(b), where the particle's trajectory depends on the Coulomb interaction with the other particle.

\subsubsection{Multiple encoders}

One can easily adapt the above example to the multi-encoder setting described in Sec. \ref{sec:mult_enc}. Instead of having a single agent $A$, we use two agents $A_1$ and $A_2$, where agent $A_i$ only observes the results of the reference experiment associated with particle $i$. We provide detailed results in Appendix \ref{appendix:multi_enc}. The main finding is that there is no way for the encoding agents to directly encode the product of the charges $q_1 \cdot q_2$ anymore because each agent only has access to reference experiments involving a single charge. Instead, the representation produced by each encoding agent now stores $q_i$ individually (in addition to the mass $m_i$ as before). Hence, the additional structure imposed by splitting the encoding agent in two yields further disentanglement of the physical parameters of the system, allowing us to identify the individual charges rather than merely their product.

\subsection{Local representation of two-qubit states}

\subsubsection{Setup}

We consider a two-qubit system, i.e., a four dimensional quantum system. Finding a representation of such a system from measurement data is a non-trivial task called quantum state tomography~\cite{paris_quantum_2004}. In our operational setting, an agent $A$ has access to a reference experiment consisting of two devices, where the first device creates (many copies of) a quantum system in a state $\rho$, i.e., a positive semi-definite $4 \times 4$ matrix with unit trace, which depends on the parameters of the device. The second device can perform binary measurements (with output ``zero'' or ``one''), described by projections $\ket{\psi}\!\bra{\psi}$, where $\ket{\psi}$ is a pure state of two qubits.
\footnote{The probability to get outcome ``one'' for a measurement $\ket{\psi}\!\bra{\psi}$ is given by $p(\rho,\psi): =\bra{\psi} \rho  \ket{\psi}$.}
For the reference experiments, we fix $75$ randomly chosen binary measurements $\ket{\psi_1}\!\bra{\psi_1},\dots,\ket{\psi_{75}}\!\bra{\psi_{75}}$. 
For a given state $\rho$, the input to the encoder $A$ then consists of the probabilities to get ``one'' for each of the fixed $75$ measurements, respectively. 
The state $\rho$ is varied between training examples.

Three agents $B_1,B_2$ and $B_3$ are now required to answer different questions about prediction experiments with the two-qubit system:
\begin{itemize}
\item Agent $B_1$ and $B_2$ are asked questions about measurement output probabilities on the first and second qubit, respectively.
\item Agent $B_3$ is asked to predict joint measurement output probabilities on both qubits.
\end{itemize} 

More concretely, the question inputs consist of a binary measurement $\ket{\omega}\!\bra{\omega}$ (on one or two qubits, respectively), parametrised again by 75 randomly chosen projectors  $\ket{\phi_1}\!\bra{\phi_1},\dots,\ket{\phi_{75}}\!\bra{\phi_{75}}$. That is, the $i$-th question input corresponds to the probabilities $p(\omega,\phi_i)\coloneqq|\langle\phi_i|\omega\rangle|^2$ for all $i \in \{1,\dots,75 \}$.

\subsubsection{Results}
\begin{figure*}[ht!] 
\centering
\includegraphics[width=\textwidth]{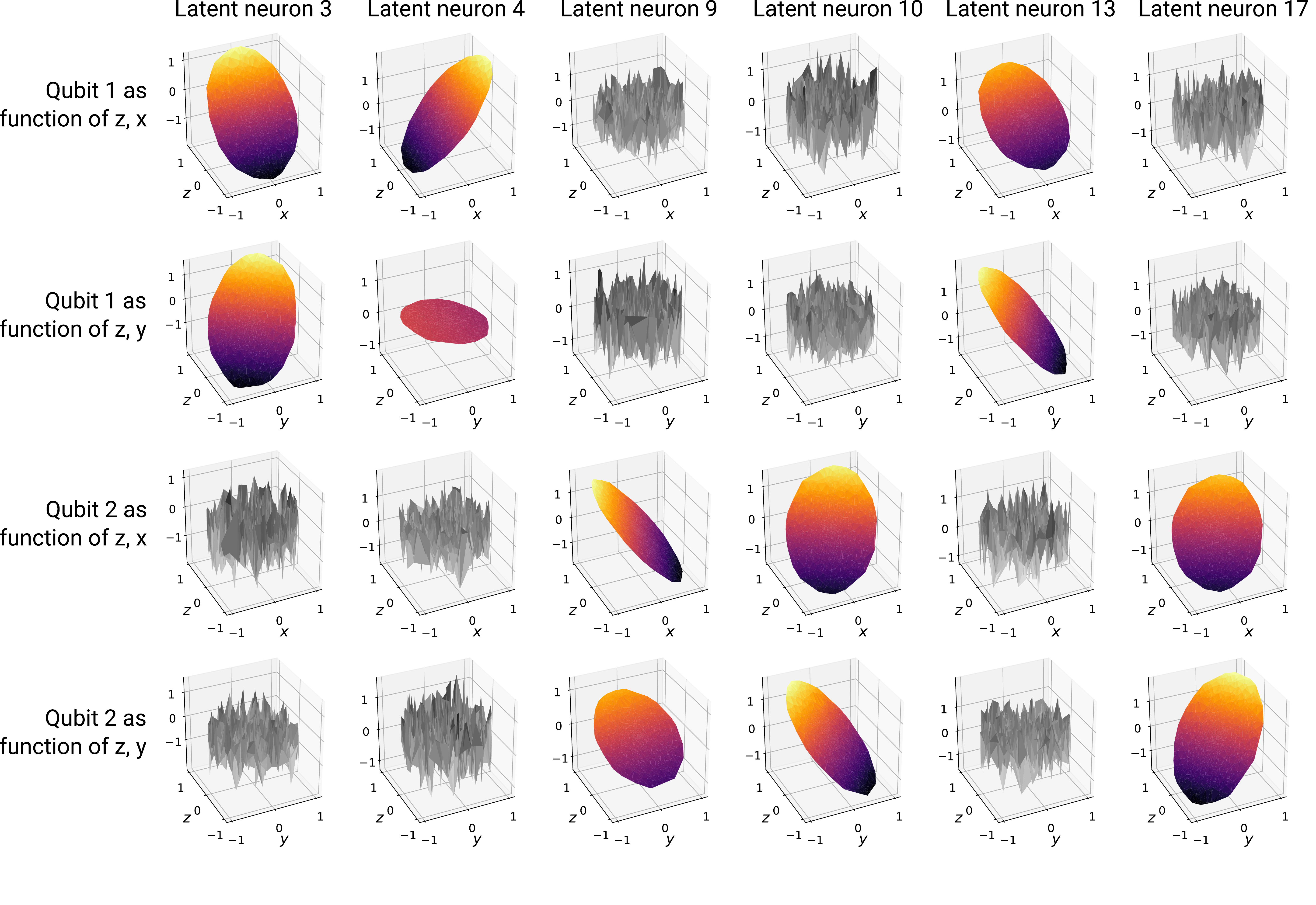}
\caption{\textbf{Results for the quantum mechanics example with two-qubit states.} We consider a quantum-mechanical system of two qubits. An encoder $A$ maps  tomographic data of a two-qubit state to a representation of the state. Three agents $B_1,B_2$ and $B_3$ are asked questions about the measurement output probabilities on the two-qubit system, where a question is given as the parameterisation of a measurement. Agents $B_1$ and $B_2$ are asked to predict measurement outcome probabilities on the first and second qubit, respectively. The third agent $B_3$ is tasked to predict measurement probabilities for arbitrary measurements on the full two-qubit system. Starting with 20 available latent neurons, we find that only 15 latent neurons are used to store the parameters required to answer the questions of all agents $B_1,B_2$ and $B_3$. Agent $B_3$ requires access to all parameters, while agents $B_1$ and $B_2$ need only access to two disjoint sets of three parameters, encoded in latent neurons 3,4,13 and 9,10,17, respectively. The plots show the activation values for these latent neurons in response to changes in the local degrees of freedom of each qubit, with the bottom axes of the plots denoting the components of the reduced one-qubit state $\rho=1/2(\id + x \, \sigma_x +y \, \sigma_y+z\,  \sigma_z)$ on either qubit 1 or 2.}  
\label{fig:qubit_xz}
\end{figure*}
We find that three latent neurons are used for each of the local qubit representations as required by agents $B_1$ and $B_2$. These local representations store combinations of the $x$-,$y$- and $z$-component of the Bloch sphere representation $\rho=1/2(\id + x \sigma_x +y \sigma_y+z \sigma_z)$ of a singe qubit (see Fig.~\ref{fig:qubit_xz}), where $\sigma_x,\sigma_y,\sigma_z$ denote the Pauli matrices. 
In general, a two-qubit mixed state $\rho$ is described by $15$ parameters, since a Hermitian $4 \times 4$ matrix is described by $16$ parameters, and one parameter is determined by the others due to the unit trace condition. 
Indeed, we find that the agent who has to predict the outcomes of the joint measurements accesses 15 latent neurons, including the ones storing the two local representations. 
Having chosen a network structure with 20 latent neurons, the 5 superfluous neurons are being successfully recognised and ignored by all of the agents $B_1,B_2$ and $B_3$. 
These numbers correspond to the numbers found in the analytical approach in Ref.~\cite{Gamel_2016}.
\newpage
\section{Reinforcement learning}\label{sec:reinforcement_learning}

So far, we have considered scenarios where agents make predictions about specific experimental settings and disentangle a latent representation by answering various questions. 
There, we understood \emph{answering} different questions as making \emph{predictions} about different aspects of a subsystem. 
Instead, we could have understood answers as \emph{sequences of actions} that achieve a specific goal. 
For example, such a (delayed) goal may arise when building experimental settings that bring about a specific phenomenon, or more generally when designing or controlling complex systems. 
In particular, we may view a prediction as a one-step sequence.

In the case of predictions, it is easy to evaluate the quality of a prediction, since we are predicting quantities whose actual value we can directly observe in Nature. 
In contrast, the correct sequences of actions may not be easily accessible from a given experimental setting: upon taking a first action, we do not yet know whether this was a good or bad action, i.e., whether it is part of a ``correct'' sequence of actions or not. 
Instead, we might only receive a few, sparsely distributed, discrete rewards while taking actions. In the typical case, there is only a binary reward at the end of a sequence of actions, specifying whether we reached the desired goal or not.
Even in a setting where a single action suffices to reach a goal, such a binary reward would prevent us from defining a useful answer loss in the same manner as before. To see this, consider the toy example in Fig.~\ref{fig:charged_setup}a again: the agent had to choose an angle $\alpha_i$, given a (representation of the) setting, specified by the parameters $(m_{fix}, m_1, q_1,m_2, q_2)$ and a question $v_i$, in order to shoot the particle into the hole. 
We assumed that we can evaluate the ``quality'' of the angle chosen by the agent by comparing it to the optimal angle (or equivalently measuring the distance between the agent's shot and the hole). 
If we instead only have access to a binary \emph{reward} specifying whether or not the agent successfully hit the (finite-sized) hole, we cannot define a smooth answer loss, which is required for training a neural network.

The problem that the feedback from the environment, i.e., the reward, is discrete or delayed can both be solved by viewing the situation as a reinforcement learning environment: given a representation of the setting (described by the masses and charges) and a question (a velocity), the agent can take different actions (corresponding to different angles at which the mass is shot) and receives a binary reward if the mass lands in the hole. Therefore, we can employ reinforcement learning techniques and learn the optimal answer. 

In reinforcement learning~\cite{sutton_1998_reinforcement}, an agent learns to choose actions that maximise its expected, cumulative, future, discounted reward. In the context of our toy example, we would expect a trained agent to always choose the optimal angle. 
Hence, predicting the behaviour of a trained agent would be equivalent to predicting the optimal answer and would impose the same structure on the parameterisation. 
In this example, the optimal solution consists of a single choice. In a more complex setting, it might not be possible to perform a (literal and metaphorical) hole-in-one.
 Generally, an optimal answer may require sequences of (discrete or continuous) actions, as it is for example the case for most control scenarios. In the settings we henceforth consider, questions might no longer be parameterised or given to the agent at all. That is, the question may be constant and just label the task that the agent has to solve.

In this section, we impose structure on the parameterisation of an experimental setting by assuming that different agents only require a subset of parameters to take a successful sequence of actions given their respective goals. 
To this end, we explain how experimental settings may be understood in terms of instances of a reinforcement learning environment and demonstrate that our architecture is able to generate an operationally meaningful representation of a modified standard reinforcement learning environment by predicting the behaviour of trained agents. 

Moreover, in Appendix~\ref{appendix:predicting_rl}, we lay out the details for the algorithm that allows us to generate and disentangle the parameterisation of a reinforcement learning environment given various reinforcement learning agents trained on different tasks within the same environment. There, we also prove that this algorithm produces agents which are at least as good as the trained agents while only observing part of the disentangled abstract representation.
The detailed architecture used for learning is described in Appendix~\ref{appendix:detail_RL} and is combing methods from GPU-accelerated actor-critic architectures~\cite{babaeizadeh_2017_reinforcement} and deep energy-based models~\cite{jerbi_2019_framework} for projective simulation~\cite{briegel_2012_projective}.

\subsection{Experiments as reinforcement learning environments}
\begin{figure*}[t!] 
\centering
\begin{subfigure}{0.45\textwidth}
\includegraphics[width=\textwidth]{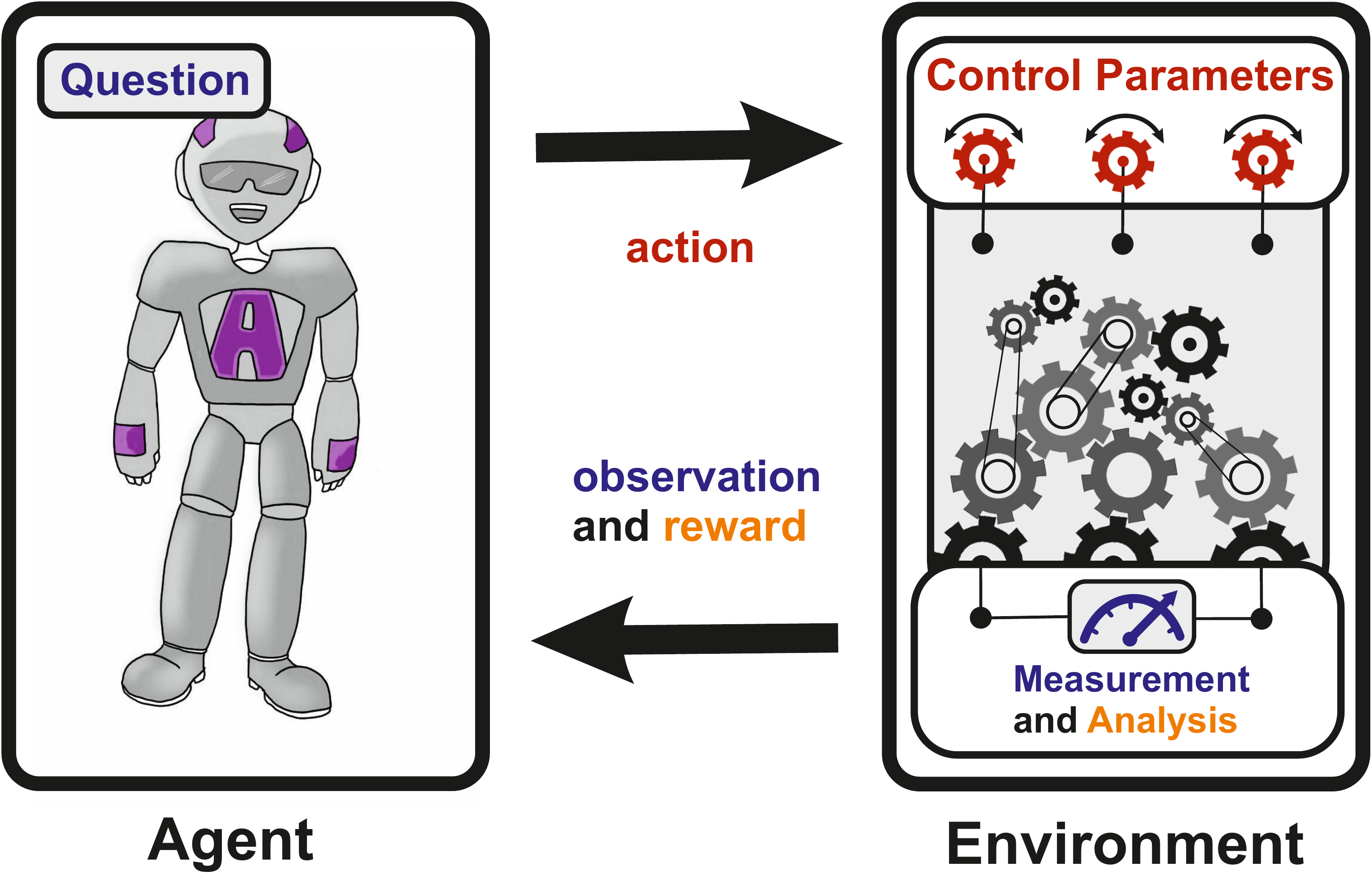} 
\caption{}
\end{subfigure}\quad\hspace{1cm}
\begin{subfigure}{0.35\textwidth}
\includegraphics[width=\textwidth]{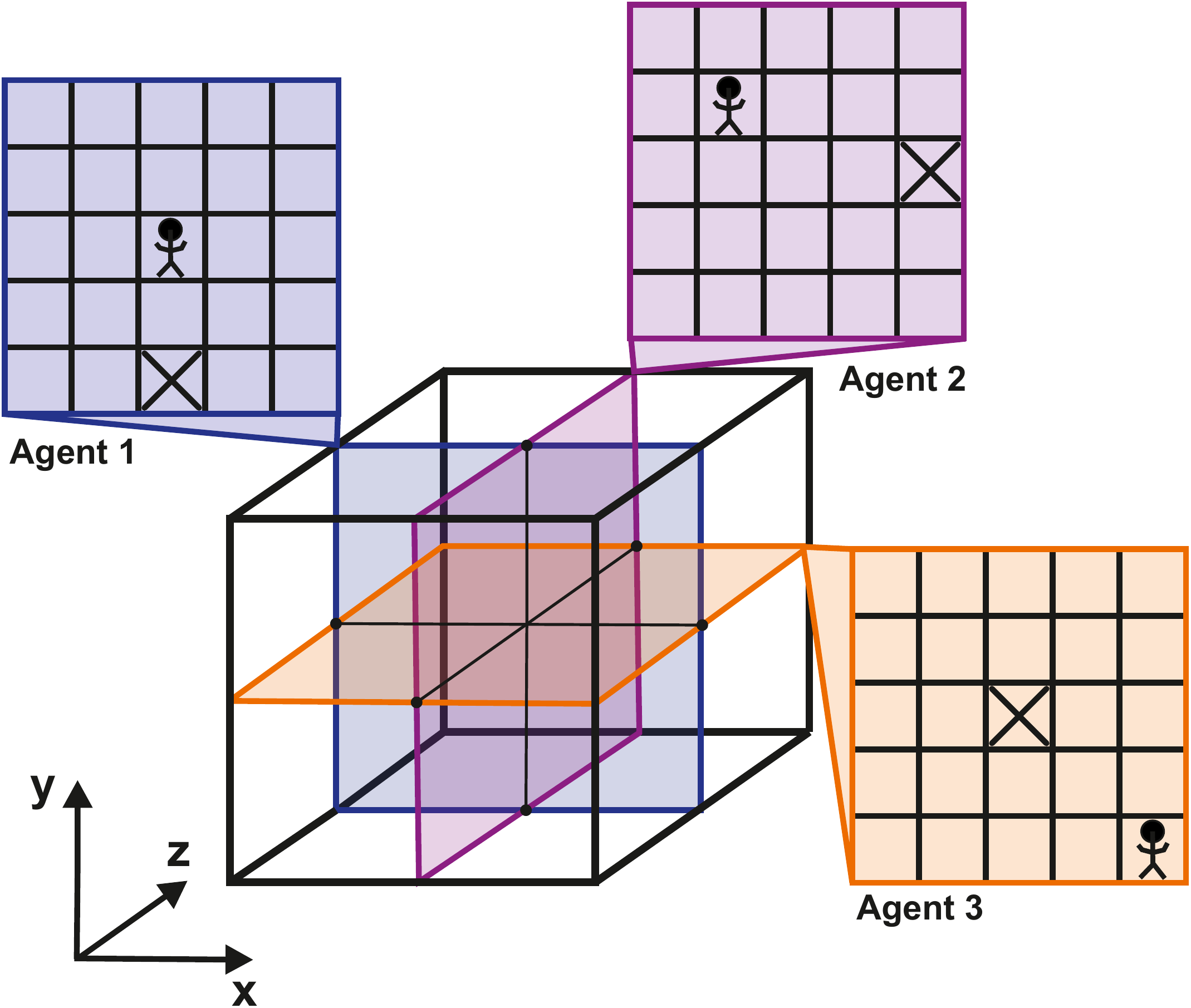} 
\caption{}
\end{subfigure}
\caption{\textbf{Experiments and reinforcement learning environments.} (a) In reinforcement learning an agent interacts with an environment. The agent can perform actions on the environment and receives perceptual information in form of an observation, i.e. the current state of the environment, and a reward which evaluates the agent's performance. An agent can also interact with an experimental setting (pictured as a complex network of gears) to answer its question by e.g., adjusting some control parameters (represented as red gears). It receives perceptual information in form of measurement data which may also have been analysed to provide an additional assessment of the current setting. (b) Sub-grid world environment. In this modified, standard reinforcement learning environment, agents are required to find a reward in a 3D grid world. Different agents are assigned different planes in which their respective rewards are located. Agents observe their position in the 3D gridworld and can move along any of the three spatial dimensions. An agent receives a reward once it has found the X in the grid. Then, the agent is reset to an arbitrary position and the reward is moved to a fixed position in a plane intersecting the agent's initial position.}  
\label{fig:env_to_exp}
\end{figure*}

In Ref.~\cite{melnikov_active_2018} the design of experimental settings has been framed in terms of reinforcement learning~\cite{sutton_1998_reinforcement} and here we formulate a similar setting: an agent interacts with experimental settings to achieve certain results. At each step the agent observes the current measurement data and/or setting and is asked to take an action regarding the current setting. This action may for instance affect the parameters of an experimental setting and hence might change the obtained measurement data. The measurement results are subsequently evaluated and the agent might receive a reward if the results are identified as ``successful''. 
The correspondence between experiments as described in this section and reinforcement learning environments can be understood as follows (cf. Fig.~\ref{fig:env_to_exp}a). An \textit{experimental setting} is interpreted as the current, internal state of an environment. The \textit{measurement data} then corresponds to the observation received from the environment. The agent performs an action according to the current observation and its question. Actions may affect the internal state of the experimental setting. For instance, the \textit{experimental parameters} describing the setting can be adjusted or chosen by an agent through actions. The reward function, which takes the current measurement data as input, describes the \textit{objective} that is to be achieved by an agent.

Since the same experiment can serve more than one purpose, we can have many agents interact with the same experimental setting to achieve different results. In fact, we can expect most experiments to be highly complex and have many applications. For instance, photonic experiments have a plethora of applications~\cite{erhard_2018_twisted} and various experimental and theoretical gadgets have been developed with these tools for different tasks~\cite{krenn_automated_2016,krenn_entanglement_2017,krenn_quantum_2017}. In this context, we may task various agents to develop gadgets for different task.
At first, we assume that all reinforcement learning agents have access to the entire measurement data. Once they have learnt to solve their respective tasks, we can employ our architecture from the previous section to predict each agent's behaviour. 
Effectively, we can then factorise the representation of the measurement data by imposing that only a minimal amount of information be required to predict the behaviour of each trained reinforcement learning agent. That is, we interpret the space of possible results in an experiment as high-dimensional manifold. 
When solving a given task however, an agent may only need to observe a submanifold which we want to parameterise.

Due to the close resemblance to reinforcement learning, we consider a standard problem in reinforcement learning  in the following and demonstrate that our architecture is able to generate an operationally meaningful representation of the environment. 
More formally, we consider partially-observable Markov decision processes~\cite{kaelbling_1998_planning} (POMDP). 
Given the stationary policy of a trained agent, we impose structure on the observation and action space of the POMDP by discarding observations and actions which are rarely encountered.  This structure defines the submanifold which we attempt to parameterise with our architecture. A detailed description of these environments is provided in Appendix~\ref{appendix:detail_RL_ENV}.

\subsection{Example with a standard reinforcement learning environment}

\subsubsection{Setup}

Here, we consider the simplest version of a task that is defined on a high-dimensional manifold while the behaviour of a trained agent may become restricted to a submanifold.
Consider a simple grid world task~\cite{sutton_1998_reinforcement} where all agents can move freely in a three-dimensional space whereas only a subspace is relevant to finding their respective rewards (see Fig.~\ref{fig:env_to_exp}b). 
Despite the apparent simplicity of this task, actual experimental settings may be understood as navigation tasks in complicated mazes~\cite{melnikov_active_2018}.
This reinforcement learning environment can be phrased as a simple game.
\begin{itemize}
\item Three reinforcement learning agents are positioned randomly within a discrete $12 \times 12 \times 12$ grid world. 
\item The rewards for the agents are located in a $(x,y)$-, $(y,z)$- and $(x,z)$-plane  relative to their respective initial positions. The locations of the rewards in their respective planes are fixed to $(6,11)$, $(11,6)$ and $(6,6)$.
\item The agents observe their position in the grid, but not the grid itself nor the reward. 
\item The agents can move freely along all three spatial dimension but cannot move outside the grid. 
\item An agent receives a reward if it can find the rewarded site within $400$ steps. Otherwise, it is reset to a random position and the reward is re-positioned appropriately in the corresponding plane.
\end{itemize}
Generally, in reinforcement learning the goal is to maximise the expected future reward. In this case, this requires an agent to minimise the number of steps until a reward is encountered.
Therefore, the optimal policy of an agent is to move on the shortest path towards the position of the reward within the assigned plane. Clearly, to predict the behaviour of an optimal agent, we require only knowledge of its position in the associated plane. We refer to Appendix~\ref{appendix:predicting_rl} for a concise protocol to \emph{predict behaviour} of a reinforcement learning agent. A detailed description of the architecture can be found in Appendix~\ref{appendix:detail_RL}. 

\subsubsection{Results}
\begin{figure*}[ht!] 
\centering
\includegraphics[width=1.\textwidth]{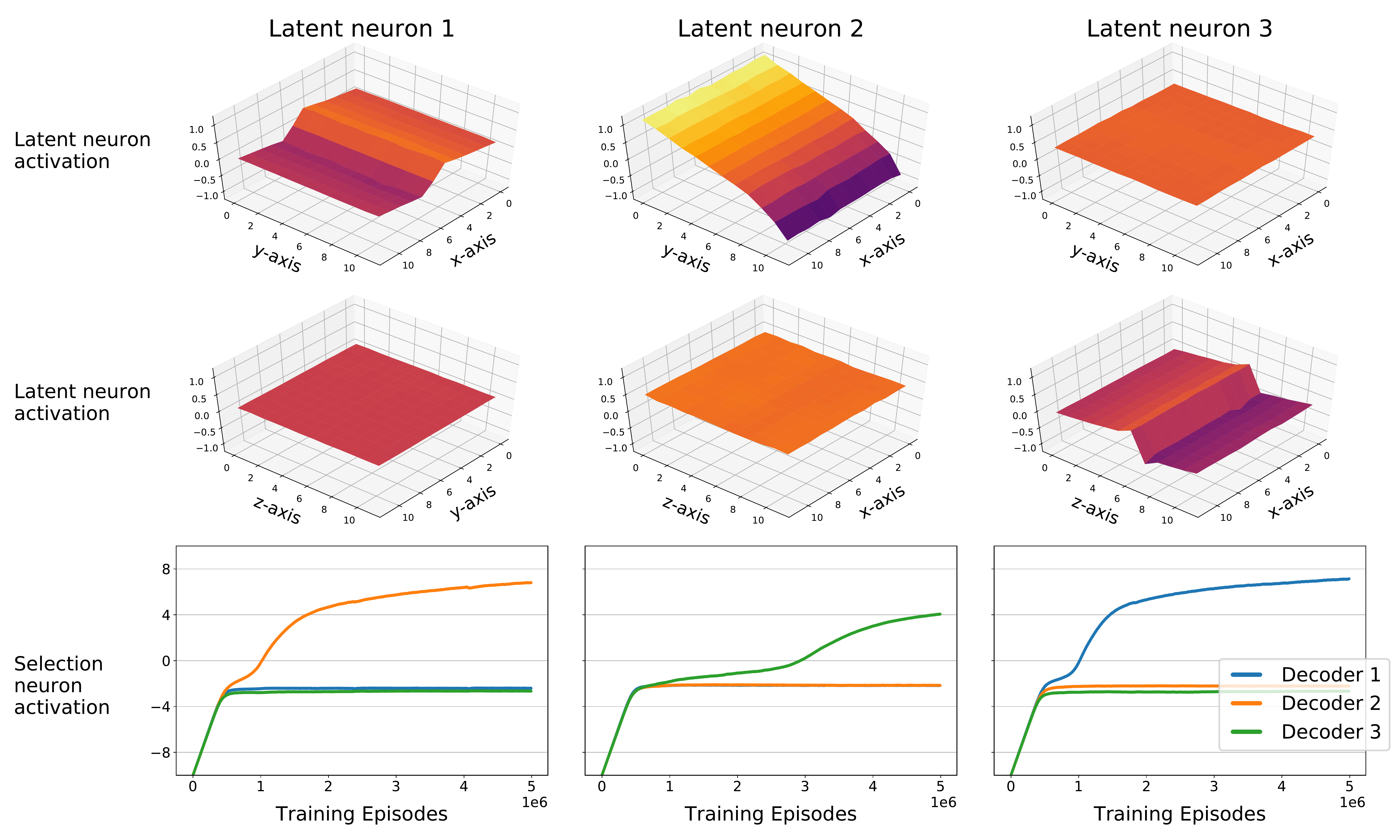}
\caption{\textbf{Results for the reinforcement learning example.} We consider a a $12 \times 12 \times 12$ 3D grid world.
The used network has 3 latent neurons and each column of plots corresponds to one latent neuron. 
For the first and second row we generated input data in which agent's position is varied along two axes and fixed to 6 in the remaining dimension.  The latent neuron activation is plotted as a function of the agent's position. We observe that the latent neurons 1,2 and 3 respond to changes in the $x$-, $y$- and $z$-position, respectively.
\\ The third row shows which decoder receives information from the each latent neuron. Roughly, the $y$-axis quantifies how much of the information in the latent neuron is transmitted by the 3 filters to the associated decoder as a function of the training episode. Positive values mean that the filter does not transmit any information. Decoder~1 has to make a prediction about the performance of a trained reinforcement learning agent whose goal is located within a $(x,y)$-plane relative to its starting position. We observe that decoder 1 indeed only receives information about the agent's $x$- and $y$-position, i.e. latent variables 1 and 2. Similarly, predictions made by decoders 2 and 3 only require knowledge of the agents' $(y,z)$- and $(x,z)$-position, respectively, which is confirmed by the selection neuron activations (the blue line of decoder 1 in the second plot is hidden behind the orange one). 
}
\label{fig:rl_results}
\end{figure*}

The optimal policy of an agent is to move on the shortest path towards the position of the reward within its assigned plane. Predicting the behaviour of an optimal agent, we require only knowledge of its position in the associated plane. 
Hence, the information about the coordinates should be separated such that the different agents have access to $(x,y), (y,z)$ and $(x,z)$, respectively. 
Using the minimal number of parameters, this is only possible if the encoding agent $A$ encodes the $x,y,z$ coordinates of the agents $B_1,B_2$ and $B_3$ and communicates their respective position in the plane\footnote{Because the observation space is discrete, an encoding agent can, in principle, ``cheat'' and encode multiple coordinates into a single neuron. In practice, this does not happen for sufficiently large state spaces.}.

We verify this by comparing the learnt representation to a hypothesised representation. For instance, we can test whether certain neurons respond to certain features in the experimental setting, i.e., reinforcement learning environment.
Indeed, it can be seen from Fig.~\ref{fig:rl_results} that the neurons of the latent layer only respond separately to changes in the $x,y$ or $z$ position of an agent respectively. Note that the encoding agent uses a nonlinear encoding of the $x$- and $z$-parameters. Interestingly, this reflects the symmetries in the problem: the reward is located at position $x=z=6$ whenever $x$ or $z$ are relevant coordinates for an agent, whereas for the $y$-coordinate, the reward is located at position $11$.  
The encoding used by the network in this example suggests that an encoding of discrete bounded parameters may carry additional information about the hidden reward function, which may eventually help to improve our understanding of the underlying theory.

\section{Conclusion}

Machine learning is rapidly developing into the newest tool in the physicists' toolbox~\cite{zdeborova_2017_new}. In this context, neural networks have become one of the most versatile and successful methods~\cite{lecun_deep_2015,silver_mastering_2016}. However, deep neural networks, while performing very well on a variety of tasks, often lack interpretability~\cite{olah_2018_the}. Therefore, representation learning, and in particular methods for learning interpretable representations, have recently received increased attention~\cite{Higgins2017, bengio_2017_consciousness,thomas_2018_disentangling,francois_lavet_combined_2018,jonschkowski_2015_learning,ried_2019_how}. In the scientific process in particular, representations of physical systems play a central role. 
To this end, we have developed a neural network architecture that can generate operationally meaningful representations within experimental settings. Roughly, we call a representation operationally meaningful if it can be shared efficiently between various agents that have different goals. We have demonstrated our methods for small toy examples in classical and quantum mechanics. Moreover, we have also considered cases where the experimental process may be framed as an interactive reinforcement learning scenario~\cite{melnikov_active_2018}. Our architecture also works in such a setting and generates representations which are physically meaningful and relatively easy to interpret.

In this work, we have interpreted the learnt representation by comparing it to some known or hypothesised representation. Instead, we could also seek to automate this process by employing unsupervised learning techniques that categorise experimental data by a metric defined by the response of different latent neurons. 
For the toy examples that we considered here, the learnt representation is small and simple enough to be interpretable by hand.
However, for more complex problems, additional methods for making the representation more interpretable may be required. 
For example, instead of using a single layer of latent neurons to store the parameters, recent work has shown the potential of semantically constrained graphs for this task~\cite{krenn_selfies_2018}. We expect that these methods can be integrated into our architecture to produce interpretable and meaningful representations even for highly complex latent spaces.

While we used an asynchronous, deep energy-based projective simulation model for reinforcement learning, our method for representation learning within reinforcement learning environments is independent of the exact reinforcement learning model and can be combined with other state-of-the-art techniques such as asynchronous, advantage actor-critic (A3C) methods~\cite{mnih_2016_asynchronous}. In fact, it may even be applied in settings with auxiliary tasks~\cite{jaderberg_2017_reinforcement} to develop meaningful representations. 

\section*{Source code and implementation details}
The source code, as well as details of the network structure and training process, including parameters, is available at \url{https://github.com/tonymetger/communicating_scinet} (for the first examples) and \url{https://github.com/HendrikPN/reinforced_scinet} (for the reinforcement learning part)
The networks were implemented using the Tensorflow~\cite{abadi_2015_tensorflow} and PyTorch~\cite{paszke_2017_automatic} library, respectively.

\section*{Contributions}
HPN, TM and RI contributed equally to the initial development of the project and composed the manuscript. 
HPN and TM performed the numerical work. 
SJ and LMT contributed to the theoretical and numerical development of the reinforcement learning part.
HJB and RR initialised and supervised the project.
All authors have discussed the results and contributed to the conceptual development of the project. 

\section*{Acknowledgments}
HPN, SJ, LMT and HJB acknowledge support from the Austrian Science Fund (FWF) through the DK-ALM: W1259-N27 and  SFB  BeyondC  F71. 
RI, HW and RR acknowledge support from from the Swiss National Science Foundation through SNSF project No. 200020\_165843 and through the National Centre of Competence in Research \textit{Quantum Science and Technology} (QSIT).
TM acknowledges support from ETH Z\"urich and the ETH Foundation through the \textit{Excellence Scholarship \& Opportunity Programme}, and from the IQIM, an NSF Physics Frontiers Center (NSF Grant PHY-1125565) with support of the Gordon and Betty Moore Foundation (GBMF-12500028).
SJ  also  acknowledges the Austrian Academy of Sciences as a recipient of the DOC Fellowship.  
HJB was also supported by the Ministerium f\"ur Wissenschaft, Forschung, und Kunst BadenW\"urttemberg  (AZ:33-7533.-30-10/41/1).
This work was supported by the Swiss National Supercomputing Centre (CSCS) under project ID da04.

\vspace{3em}
\appendix
\noindent{\LARGE{\sffamily {Appendix}}}

\section{Charged masses with multiple encoding agents}\label{appendix:multi_enc}

\begin{figure*}[ht!] 
\centering
\includegraphics[width=\textwidth]{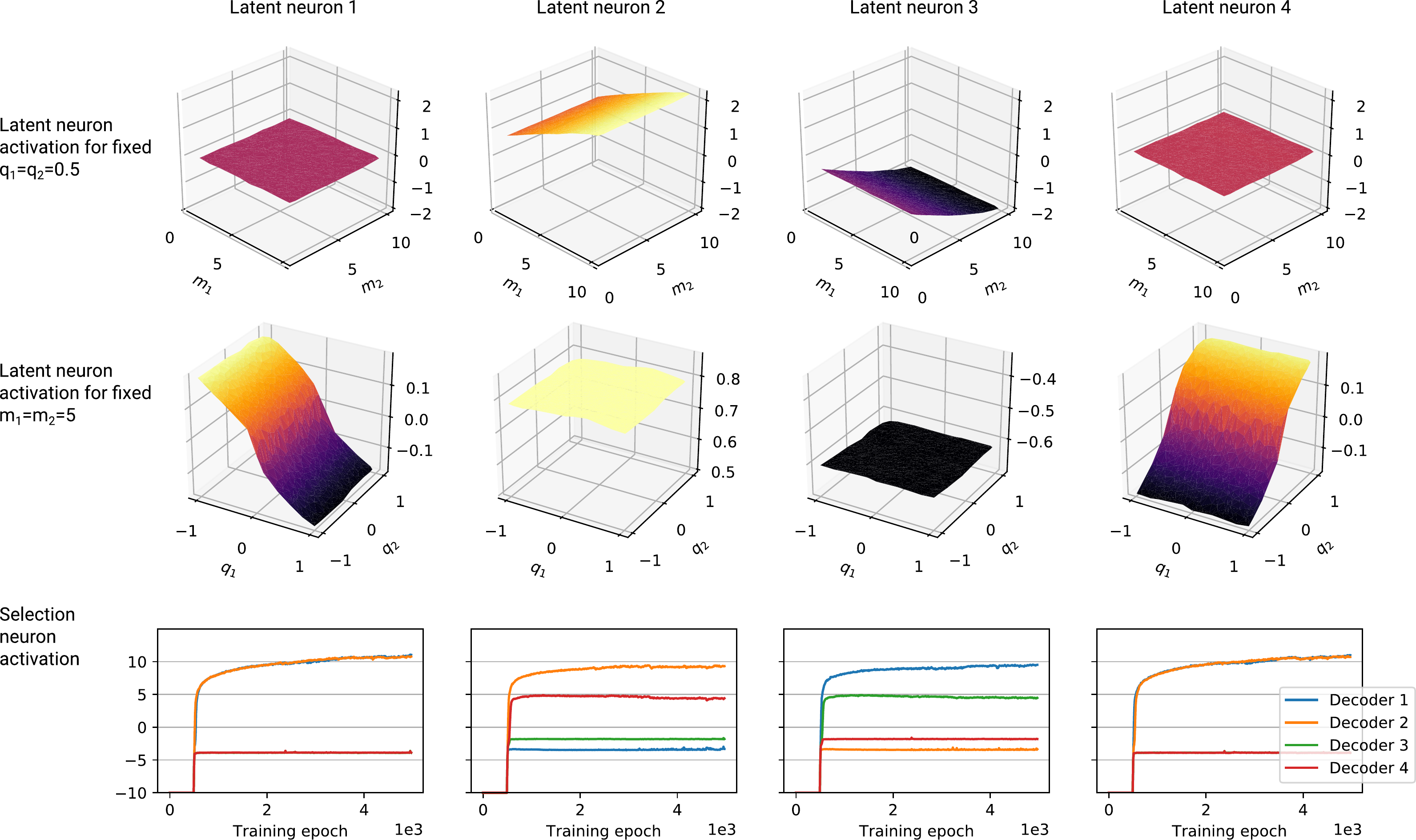}
\caption{\textbf{Results for the example with charged masses using two encoders.} The used network has 4 latent neurons and each column of plots corresponds to one latent neuron. For an explanation of how these plots are generated, see the caption of Fig. \ref{fig:charge_results}.
We observe that latent neurons 2 and 3 store the masses $m_1$ and $m_2$, respectively, while latent neurons 1 and 4 are independent of the mass.
Latent neurons 1 and 4 store (a monotonic function of) the charges $q_1$ and $q_2$, respectively, and are indepependent of $m_1$ and $m_2$. 
\\ The third row shows that the charges $q_1$ and $q_2$ are only transmitted to decoders 3 and 4, which are asked to make predictions about interaction experiments (the blue line of decoder 1 and the green line of decoder 3 are hidden under the orange and red lines, respectively, in both of these plots). The mass $m_1$, stored in the latent neuron 2, is transmitted to decoders 1 and 3, which are the two decoders that make predictions about particle 1. Analogously, $m_2$ is transmitted to decoders 2 and 4, which make predictions about particle 2.
}   
\label{fig:charge_results_multi_enc}
\end{figure*}

In this Section, we provide details about the representation learnt by a neural network with two encoders for the example involving charged masses introduced in Sec. \ref{sec:charged_masses}. The setup is the same as that in Section \ref{sec:charged_masses}, with the only difference being that we now use two encoders (the number of decoders and the predictions they are asked to make remain the same). Accordingly, we split the input into two parts: the measurement data from the reference experiments involving particle 1 are used as input for encoder 1, and the data for particle 2 are used as input for encoder 2. Each encoder has to produce a representation of its input. We stress that the two encoders are separated and have no access to any information about the input of the other encoder. The representations of the two encoders are then concatenated and treated like in the single-encoder setup; that is, for each decoder, a filter is applied to the concatenated representation and the filtered representation is used as input for the decoder.

The results for this case are shown in Fig. \ref{fig:charge_results_multi_enc}. Comparing this result with the single-encoder case in the main text, we observe that here, the charges $q_1$ and $q_2$ are stored individually in the latent representation, whereas the single encoder stored the product $q_1 \cdot q_2$. This is because, even though the decoders still only require the product $q_1 \cdot q_2$, no single encoder has sufficient information to output this product: the inputs of encoders 1 and 2 only contain information about the individual charges $q_1$ and $q_2$, respectively, but not their product. Hence, the additional structure imposed by splitting the input among two encoders yields a representation with more structure, i.e., with the two charges stored separately.

\section{Reinforcement learning environments for representation learning}\label{appendix:detail_RL_ENV}
\begin{figure*}[ht!] 
\centering
  \includegraphics[width=0.5\textwidth]{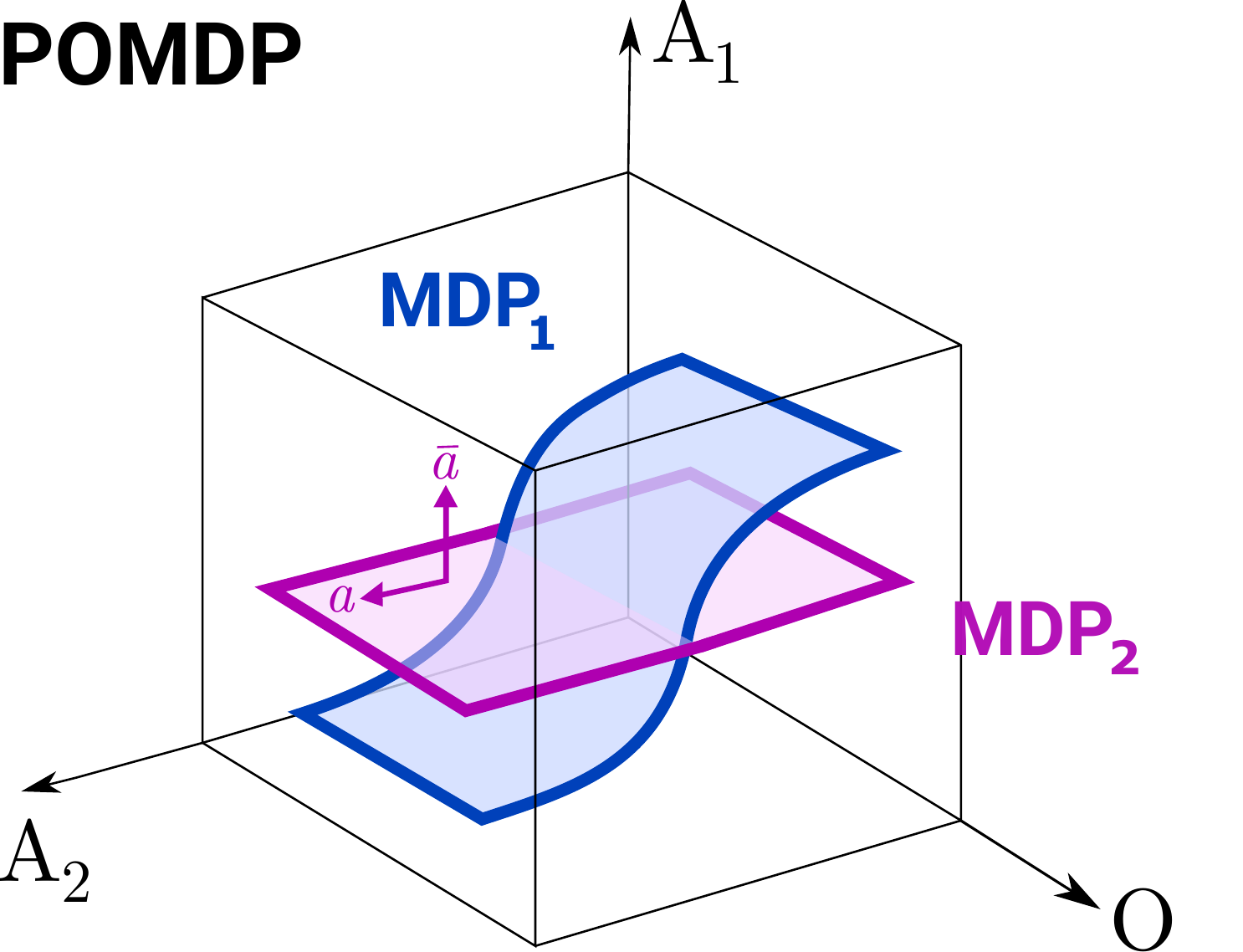} 
\caption{
\textbf{Observation and action space of the reinforcement learning environment. } 
The environment is described by a POMDP with an observation space $O$ and action space $A=A_1\times A_2$. The policy $\pi$ of the agent restricts the space $(O\times A)$ to a subset that we assume to describe an MDP. For example, $\mathrm{MDP}_1$ corresponds to a policy $\pi_1$ of one agent and $\mathrm{MDP}_2$ corresponds to a policy $\pi_2$ of another agent. 
The observation-action space is therefore restricted to a subset $(O \times A)|_{\pi_i (a|o)\geq 1/|A|}$ according to the learnt policy $\pi_i$ of an agent. Depicted is an action $a$ which is contained in this subset $\mathrm{MDP}_2$ together with an action $\bar{a}$ which is contained in the complement $(O \times A)|_{\pi_2 (a|o)< 1/|A|}$ of this subset.
}
\label{fig:statemani}
\end{figure*}
In this appendix, we give a formal description of the reinforcement learning environments that we consider for representation learning. As we will see, the sub-grid world example in the main text is a simple instance of such a class of environments. In general, we consider a reinforcement learning problem where the environment can be described as a Partially Observable Markov Decision Process~\cite{kaelbling_1998_planning} (POMDP), i.e., a MDP where not the full state of the environment is observed by the agent. We work with an observation space $O=\{o_1,...,o_{N}\}$, an action space $A=\{a_1,...,a_{M}\}$ and a discount factor $\gamma \in [0,1)$. This choice of environment does not reflect our specific choice of learning algorithm used to train the agent, as the latter does not construct so-called belief states that are commonly required to learn optimal policies in a POMDP. Rather, we want to show that our approach is applicable to slightly more general environments than Markov Decision Processes (MDPs) for which the learning algorithms we use are proven to converge to optimal policies in the limit of infinitely many interactions with the environment~\cite{sutton_1998_reinforcement,clausen_2019_on}. The generalisation to POMDPs still preserves the ``Markovianity'' of the environments and allows to consider only stationary (but not necessarily deterministic) policies $\pi(a|o)$, associated to stationary expected returns $R_\pi (o)$. 

Now consider an agent which exhibits some \emph{non-random} behaviour in this environment, which is characterised by a larger expected return than from a completely random policy.
Such a stationary policy may restrict observation-action space $(O \times A)$ to a subset $(O \times A)|_{\pi (a|o)\geq 1/|A|}$ of observations and actions likely to be experienced by the agent depending on its learnt policy $\pi$ and the environment dynamics. This notation indicates that, in any given observation, we discard actions that have probability less than random (i.e., less than $\frac{1}{|A|}$) of being taken by the agent, indicating that the agent's policy has learnt (un)favoring actions. In general, discarding actions also restricts the observation space. The subset $(O \times A)|_{\pi (a|o)\geq 1/|A|}$, along with the POMDP dynamics, describes a new environment. For simplicity, we assume that the restricted environment can be described by an MDP. This is trivially the case if the original environment is itself an MDP, and also the case for the sub-grid world environment discussed in the main text. The MDP inherits the discount factor $\gamma \in [0,1)$ of the original POMDP, which allows us to consider w.l.o.g.\ finite-horizon MDPs\footnote{An infinite-horizon MDP with discount factor $\gamma \in [0,1)$ can be $\epsilon$-approximated by a finite-horizon MDP with horizon $l_\text{max} = \log_\gamma (\frac{\epsilon(1-\gamma)}{\max_o |R(o)|})$.}, which are MDPs of finite episodes lengths (here, we set the maximum length to $3l_\text{max}$). A conceptual view on this POMDP restricted by policies is provided in Fig.~\ref{fig:statemani}.

\section{Representation learning in reinforcement learning environments}\label{appendix:predicting_rl}

In our approach to factorising abstract representations of reinforcement learning agents, we assume that an agent's policy can impose structure on an environment (as described in Appendix~\ref{appendix:detail_RL_ENV}) and we want this structure to be reflected in its latent representation. Therefore, decoders need to predict the \emph{behaviour} of a reinforcement learning agent while requiring minimal knowledge of the latent representation. However, we still lack a definition of what it means for a decoder to predict the behaviour of an agent. Here, we consider decoders predicting the expected rewards for these agents given the representation communicated by the encoder. Later, we show that this is enough to produce a policy which is at least as good as the policy of the reinforcement learning agent. 

To be precise, each decoder attempts to learn the expected return $R_{\pi}(o,a)$ given an observation-action pair $(o,a)\in (O \times A)|_{\pi (a|o)\geq 1/|A|}$ under the policy $\pi$ of an agent. For observation-action pairs outside the restricted subset we assign values $0$. The input space of the decoder and the restriction to the subset is illustrated in Fig.~\ref{fig:statemani}.  In fact, decoders not only learn to predict $R$ for a single action but for a sequence of actions $\{a^{(1)},\dots, a^{(l)}\}_l$ with length $l\geq1$.  This is because it can help stabilise the latent representation of environments with small actions spaces and simple reward functions. In practice however, $l=1$ is sufficient to obtain a proper representation. In the same way, we can help  to stabilise the latent representation by forcing an additional decoder to reconstruct the input from the latent representation. For brevity, we write $\{a^{(i)}\}_l$ for sequences of actions of length $l$.

The method described in this appendix, allows us to pick a number of reinforcement learning agents that have learnt to solve various problems on a specific kind of reinforcement learning environment (see Appendix~\ref{appendix:detail_RL_ENV}) and parameterise the subspaces relevant for solving their respective tasks. Specifically, the procedure splits into three parts:
\begin{enumerate}
\item[(i)]{Train reinforcement learning agents.}
\item[(ii)]{Generate training data for representation learning from reinforcement learning agents (see Appendix~\ref{appendix:policy_data}).}
\item[(iii)]{Train encoders with decoders on training data such that they can reproduce (w.r.t. performance) the policy of the reinforcement learning agents (see Appendix~\ref{appendix:proof_policy_data}).}
\end{enumerate}

The purpose of this Appendix is to prove that the trained decoders contain enough information to derive policies that perform as well as the ones learnt by their associated agents. Only if this is the case, we can claim that the structure imposed by the decoder reflects the structure imposed on the environment by an agent's policy.  To that end, we start by (ii) introducing the method to generate the training data, followed by (iii) a construction of a policy from a trained decoder with given performance bounds.\\

\subsection{Training data generation}\label{appendix:policy_data}
\begin{figure*}[ht!]
	\centering
	\includegraphics[width=0.8\textwidth]{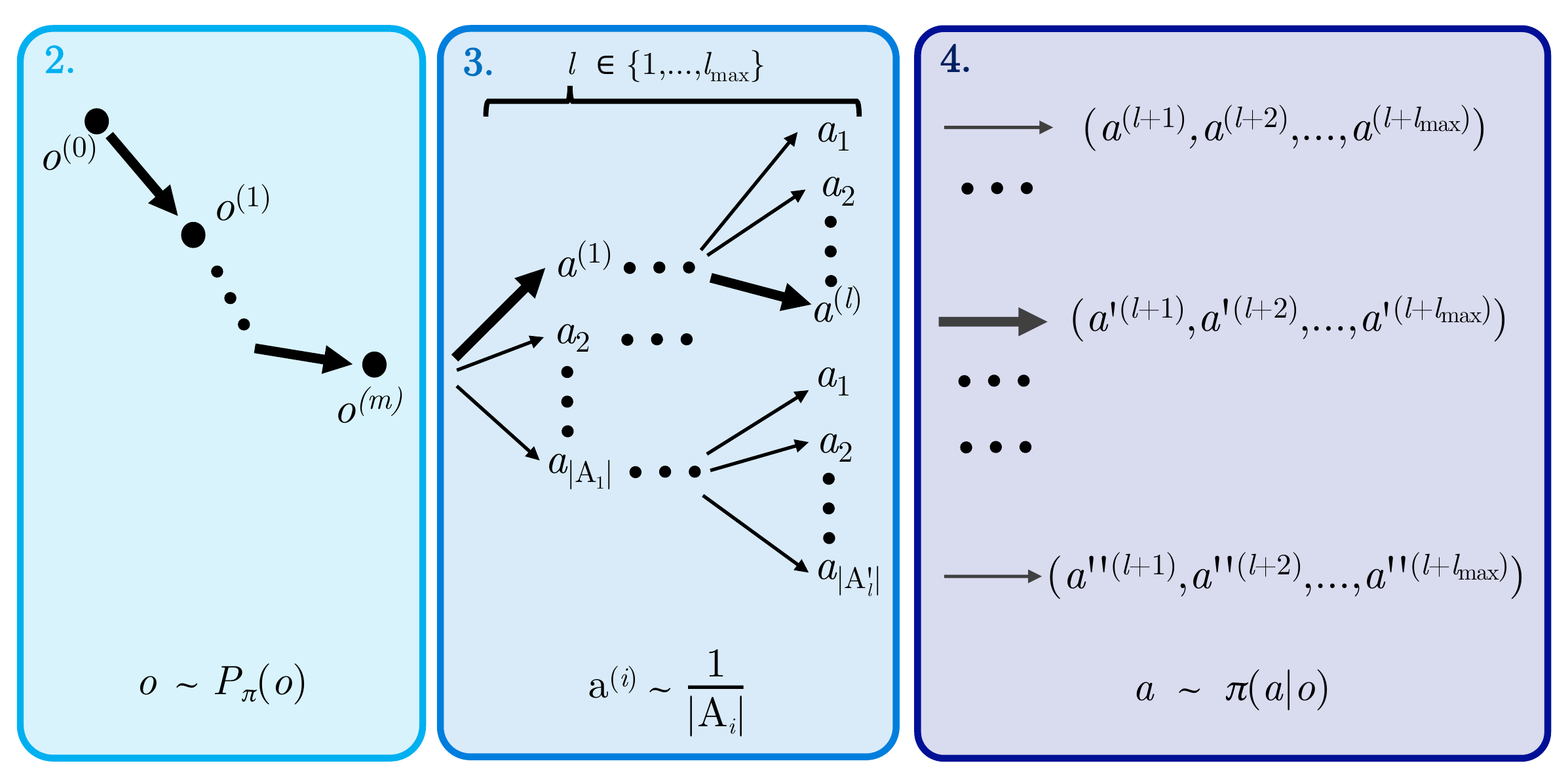}
	\caption{\textbf{Generating training data. }The training data of the decoder is sampled from the environment in three main steps (for a complete list see Appendix~\ref{appendix:policy_data}). 
2. Starting from an initial observation $o^{(0)}$, the observation $o^{(m)}$ is reached by following the policy $\pi$ of the agent. This is equivalent to sampling $o^{(m)}$ from the probability distribution $P_{\pi}(o^{(m)})$. 
3. A sequence of $l$ actions $\{a^{(i)}\}_l$ is randomly sampled from the action space, as restricted by the subset $(O \times A)|_{\pi (a|o)\geq 1/|A|}$, and executed on the environment.  We write $a_i$ $\forall i=1,\dots,|A_j|$ for actions restricted to the subset at a given observation $o^{m+j}$.
4. Finally, $l_\text{max}$ actions are drawn from the policy $\pi$ with the restriction $\pi (a|o)\geq 1/|A|$. These last actions are executed on the environment and their associated rewards are collected to compute an estimate of $R_{\pi}(o,\{a^{(i)}\}_l)$.}
	\label{fig:sequence}
\end{figure*}
The decoders are trained to predict the return values $R_{\pi}(o,\{a^{(i)}\}_l)$ for observations $o$ and sequences of actions $\{a^{(i)}\}_l$ of arbitrary length $l \leq l_\text{max}$, given a policy $\pi$. 
The training data is then generated as follows
(see Figure \ref{fig:sequence}): 
\begin{enumerate}
\item Sample two numbers $m,l$ uniformly at random from $\{1, \ldots, l_\text{max}\}$.
\item Start an environment rollout with the trained agent's policy $\pi$ for $m$ steps until the observation $o^{(m)}$ is reached.
\item Continue the rollout with $l$ actions which are sampled uniformly at random from the action space as restricted by the subset $(O \times A)|_{\pi (a|o)\geq 1/|A|}$\footnote{Note that these actions need to be sampled sequentially from the current policy of the agent, given an observation.}.
\item The rollout is completed with $l_\text{max}$ steps according to the policy $\pi$ of the agent restricted to the subset.
\item The rewards $r_j$ associated to the last $l_\text{max}$ steps are collected and used to evaluate an estimate of $R_{\pi}(o,\{a^{(i)}\}_l) = \sum_{j=1}^{l_\text{max}} \gamma^{j-1} r_j$.
\item Collect a tuple consisting of observation $o^{(m)}$,  actions $\{a^{(i)}\}_l$ and reward $R_{\pi}(o,\{a^{(i)}\}_l) \big)$.
\item Collect tuples $\big(o^{(m)},  \{\bar{a}^{(i)}\}_l,   0 \big)$ for all actions $\bar{a}^{(i)}$ which are not in the restricted subset $(O \times A)|_{\pi (a|o)\geq 1/|A|}$.
\item Repeat the procedure.
\end{enumerate}
Note, that this algorithm does not require any additional control over the environment beyond initialisation and performing actions. That is, it can be generated \emph{on-line} while interacting with the environment. In the case of a deterministic MDP and policy, one iteration of this algorithm yields the exact values of $R_{\pi}(o,\{a^{(i)}\}_l)$. In the case of a stochastic MDP or policy, one obtains instead an unbiased estimate of these values due to the possible fluctuations caused by the stochasticity of the environment dynamics and the policy. Repeated iterations of the algorithm followed by averaging of the estimates allows to decrease the estimation error. We neglect this estimation error in the next Section.

The collected tuples are used  to train the encoder and decoder through the answer loss $\mathcal{L}_a$ as discussed in the main text. In practice, short action sequences are sufficient to factorise the abstract representation of the trained agents. In the example of the main text, $l=1$ was used. We kept the general description of the return function with arbitrary sequence lengths as a possible extension for more stable factorisations.

\subsection{Reinforcement learning policy from trained decoders}\label{appendix:proof_policy_data}
Let us call $R_\text{NN}$ the function learnt by the decoder. We prove that a policy $\pi'$ satisfying $R_{\pi'}(o^{(0)})\geq R_{\pi}(o^{(0)})$ $\forall o^{(0)}$ in the MDP can be constructed from the decoder if it was trained with a certain loss $\epsilon$.
\begin{theorem}
Given a POMDP with observation-action space $O \times A$ and a policy $\pi$ that restricts the POMDP into an MDP with observation-action space $(O \times A)|_{\pi (a|o)\geq 1/|A|}$, there exists a policy $\pi'$ that satisfies $R_{\pi'}(o^{(0)})\geq R_{\pi}(o^{(0)})$ $\forall o^{(0)}$ in the MDP and that can be derived from a function which is $\epsilon$-close (in terms of a mean squared error), with $\epsilon>0$, to:
\begin{equation*}
\widetilde{R}_\pi (o,a) =
\begin{cases}
R_{\pi}(o,a) & \textrm{if }(o,a) \in (O \times A)|_{\pi (a|o)\geq 1/|A|}\\
\quad0 & \textrm{otherwise}
\end{cases}
\end{equation*}
\end{theorem}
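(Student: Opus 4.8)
The plan is to recognise $\widetilde{R}_\pi$ as the action-value function $Q_\pi$ of the restricting policy in the induced MDP (restricted to the allowed observation--action pairs), to define $\pi'$ as the policy that acts greedily with respect to the learned approximation, and to obtain $R_{\pi'}\geq R_\pi$ from the policy improvement theorem together with an argument that a small loss cannot spoil the \emph{exact} inequality. First I would make the identification precise: for sequence length $l=1$, the return $R_\pi(o,a)$ produced by the procedure of Appendix~\ref{appendix:policy_data} is, up to the horizon-$l_\text{max}$ truncation, exactly $Q_\pi(o,a)$ in the MDP obtained by restricting the POMDP to $(O\times A)|_{\pi(a|o)\geq 1/|A|}$; the truncation bias is bounded by $\max_o|R(o)|\,\gamma^{l_\text{max}}/(1-\gamma)$ (cf.\ the footnote in Appendix~\ref{appendix:detail_RL_ENV}) and can be folded into $\epsilon$. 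Hence $\widetilde{R}_\pi$ equals $Q_\pi$ on the restricted set and $0$ elsewhere, and $R_\pi(o)=\sum_a \pi(a|o)\,Q_\pi(o,a)$, where we regard $\pi$ as a policy on the restricted MDP (renormalised onto $A(o):=\{a:\pi(a|o)\geq 1/|A|\}$ if needed; note $A(o)$ is always nonempty by pigeonhole).

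Next I would invoke exact policy improvement. Let $\pi_g$ be any policy supported, at each $o$, on $\arg\max_{a\in A(o)}Q_\pi(o,a)$. Then $\sum_a\pi_g(a|o)Q_\pi(o,a)=\max_{a\in A(o)}Q_\pi(o,a)\geq\sum_a\pi(a|o)Q_\pi(o,a)=R_\pi(o)$, and applying the (monotone) Bellman evaluation operator for $\pi_g$ repeatedly propagates this to $R_{\pi_g}(o)\geq R_\pi(o)$ for every $o$; this is precisely the policy improvement theorem, which holds verbatim on a finite MDP.

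It remains to show that the policy $\pi'$ obtained greedily from the learned $R_{\mathrm{NN}}$ (over $A(o)$) is such a $\pi_g$. The restricted MDP has finitely many observation--action pairs, and under the sampling distribution used to evaluate the mean-squared error each relevant pair carries weight at least some $p_{\min}>0$; hence a loss at most $\epsilon$ forces $\norm{R_{\mathrm{NN}}-\widetilde{R}_\pi}_\infty\leq\sqrt{\epsilon/p_{\min}}$. Let $\Delta>0$ be the smallest, over all $o$, of the gaps between $\max_{a\in A(o)}Q_\pi(o,a)$ and the next strictly smaller value attained by $Q_\pi(o,\cdot)$ on $A(o)$ (with $\Delta=+\infty$ if $Q_\pi(o,\cdot)$ is constant on every $A(o)$). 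For $\epsilon<p_{\min}\Delta^2/4$ we get $\norm{R_{\mathrm{NN}}-\widetilde{R}_\pi}_\infty<\Delta/2$, so at each $o$ every maximiser of $R_{\mathrm{NN}}(o,\cdot)$ over $A(o)$ is also a maximiser of $Q_\pi(o,\cdot)$ over $A(o)$. Therefore $\pi'$ is a $\pi_g$ and inherits $R_{\pi'}(o^{(0)})\geq R_\pi(o^{(0)})$ for all $o^{(0)}$; and $\pi'$ is derived using only $R_{\mathrm{NN}}$ and the sets $A(o)$, which are fixed once the trained agent is fixed.

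The step I expect to be the main obstacle is the last one: converting a small mean-squared error -- an averaged, ``soft'' quantity -- into a guarantee about the discrete $\arg\max$. This needs a positive lower bound $p_{\min}$ on the data weight of each relevant $(o,a)$ so that the $L^2$ loss controls the sup norm, and it uses finiteness of the MDP to produce the strictly positive value gap $\Delta$ that makes the improvement robust. A smaller but genuine point is ensuring the ``restriction to an MDP'' is well posed and that the interplay of $\pi$, the discount factor, and the horizon truncation is benign; I would handle this exactly as in Appendix~\ref{appendix:detail_RL_ENV}, by assuming the restricted environment is an MDP and absorbing the truncation bias into $\epsilon$.
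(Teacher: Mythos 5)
Your proposal is correct and follows essentially the same route as the paper's proof: define $\pi'$ greedily from the learned return function, invoke policy improvement to get $R_{\pi'}\geq R_\pi$ in the exact case, and then convert the weighted mean-squared loss into a pointwise bound (via a minimum data weight, your $p_{\min}$ versus the paper's $\delta_\pi/(l_\text{max}|A|)$) small enough relative to the minimal value gap (your $\Delta$ versus the paper's explicit worst-case estimate $\gamma^{l_\text{max}}\delta_R$) to preserve the argmax. Your version is marginally more careful in spelling out the policy improvement step and the horizon-truncation bias, but the decomposition and the key quantitative argument are the same.
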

\begin{proof}
For clarity, we first prove that the construction of $\pi'$ is possible if the return values are learnt perfectly, i.e., the training loss $\mathcal{L}$ is zero.
Later, we relax this assumption and show that the proof still holds for non-zero values of the loss.

We choose the loss function to be a weighted mean square error on the subset extended to arbitrary length action sequences, i.e., $(O\times \bigcup_{k=1,\ldots, l_\text{max}}A^{k})|_{\pi (a|o)\geq 1/|A|}$,
\begin{align*}
\mathcal{L} &=\sum_{o,\{a^{(i)}\}_l} P_{\pi}(o)\frac{1}{l_\text{max}\prod_i |A_i|}\\
&\times (R_{\pi}(o,\{a^{(i)}\}_l)-R_\text{NN}(o,\{a^{(i)}\}_l))^2.
\end{align*} 
An analogous approach yields similar results for other loss	 functions.
Here, $P_{\pi}(o)$ is the probability that the observation $o$ is obtained given that the agent follows the policy $\pi$ and $A_i$ is the action space from which the action $a^{(i)}$ is sampled, as restricted by the subset.
Now, let us further restrict the sum to action sequences of length one, i.e.,
\begin{equation*}
\mathcal{L'}=\sum_{o,a} P_{\pi}(o)\frac{1}{l_\text{max}|A_1|}(R_{\pi}(o,a)-R_\text{NN}(o,a))^2,
\end{equation*}
for which it is easily verified that $\mathcal{L}' \leq \mathcal{L}$.

Using $R_\text{NN}$, we derive the following policy: 
\begin{equation}\label{eq:decoder-policy}
\pi'(a|o)=
\begin{cases}
1\quad\textrm{if }a=\textrm{argmax}_{a'} R_\text{NN}(o,a')\\
0\quad\textrm{otherwise}\\
\end{cases}
\end{equation}
Since $R_\text{NN}(o,a)$ corresponds to the return of the policy $\pi$ after observing $o$ and taking action $a$, maximising this return hence leads to a return $R_{\pi'}(o)\geq R_{\pi}(o)\ \forall o \in O_\text{MDP}$.\\

In the following, we discuss the implications of the decoder not learning to reproduce $R_{\pi}$ perfectly, i.e., $\mathcal{L} = \epsilon > 0$. More precisely, we derive a bound on $\epsilon$ under which a policy $\pi'$ satisfying $R_{\pi'}(o^{(0)})\geq R_{\pi}(o^{(0)})$ $\forall o^{(0)}$ in the MDP can still be constructed from the decoder.

The decoder can be used to construct the policy $\pi'$ defined in Eq.\ (\ref{eq:decoder-policy}) if the approximation error of $R_\text{NN}$ is small enough to distinguish the largest and second-largest return values $R_\pi(o,a)$ given an observation $o$. In the worst case, this difference can be as small as the smallest difference between any two returns given an observation
\begin{equation*}
\epsilon' = \gamma^{l_\text{max}}\delta_R,
\end{equation*} 
where $\delta_R = \min_i{|r_{i+1}-r_{i}|}$ is the minimal non-zero difference between any two values the reward function of the environment can assign (including a reward $r=0$).\\
Let us set,
\begin{equation*}
\mathcal{L'} \leq \epsilon = \frac{\gamma^{2l_\text{max}} \delta_R^2 \delta_{\pi}}{16|A|l_\text{max}}
\end{equation*}
where $\delta_{\pi}=\min_{o \in O_\text{MDP}}\{P_{\pi}(o)\ |\ P_{\pi}(o)\neq 0\}$. That is,
\begin{equation*}
\sum_{o,a} P_{\pi}(o)\frac{1}{l_\text{max}|A_1|}(R_{\pi}(o,a)-R_\text{NN}(o,a))^2 \leq \frac{\gamma^{2l_\text{max}} \delta_R^2 \delta_{\pi}}{16|A|l_\text{max}}
\end{equation*}
and hence, $\forall (o,a) \in (O \times A)|_{\pi (a|o)\geq 1/|A|}$
\begin{align*}
P_{\pi}(o)\frac{1}{l_\text{max}|A_1|}(R_{\pi}(o,a)-R_\text{NN}(o,a))^2 &\leq \frac{\gamma^{2l_\text{max}} \delta_R^2 \delta_{\pi}}{16|A|l_\text{max}}\\
(R_{\pi}(o,a)-R_\text{NN}(o,a))^2 &\leq \frac{\gamma^{2l_\text{max}} \delta_R^2}{16}\\
|R_{\pi}(o,a)-R_\text{NN}(o,a)| &\leq \frac{\epsilon'}{4}.
\end{align*}

It is sufficient for $R_\text{NN}$ to approximate $R_\pi$ with precision $\frac{\epsilon'}{4}$.
Therefore, it is sufficient to bound the error of the loss function $\mathcal{L}$ by 
\begin{equation*}
\epsilon \leq \frac{\gamma^{2l_\text{max}} \delta_R^2 \delta_{\pi}}{16|A|l_\text{max}}. 
\end{equation*}
\end{proof}

This worst case analysis shows that the error needs to be exponentially small with respect to the parameters of the problem so that we can derive strong performance bounds of the policy on the entire subset. In practice, we expect to be able to derive a functional policy even with higher losses during the training of the decoder.

\section{Model implementation for representation learning in reinforcement learning environments}\label{appendix:detail_RL}
\begin{figure*}[ht!] 
\centering
\includegraphics[width=0.9\textwidth]{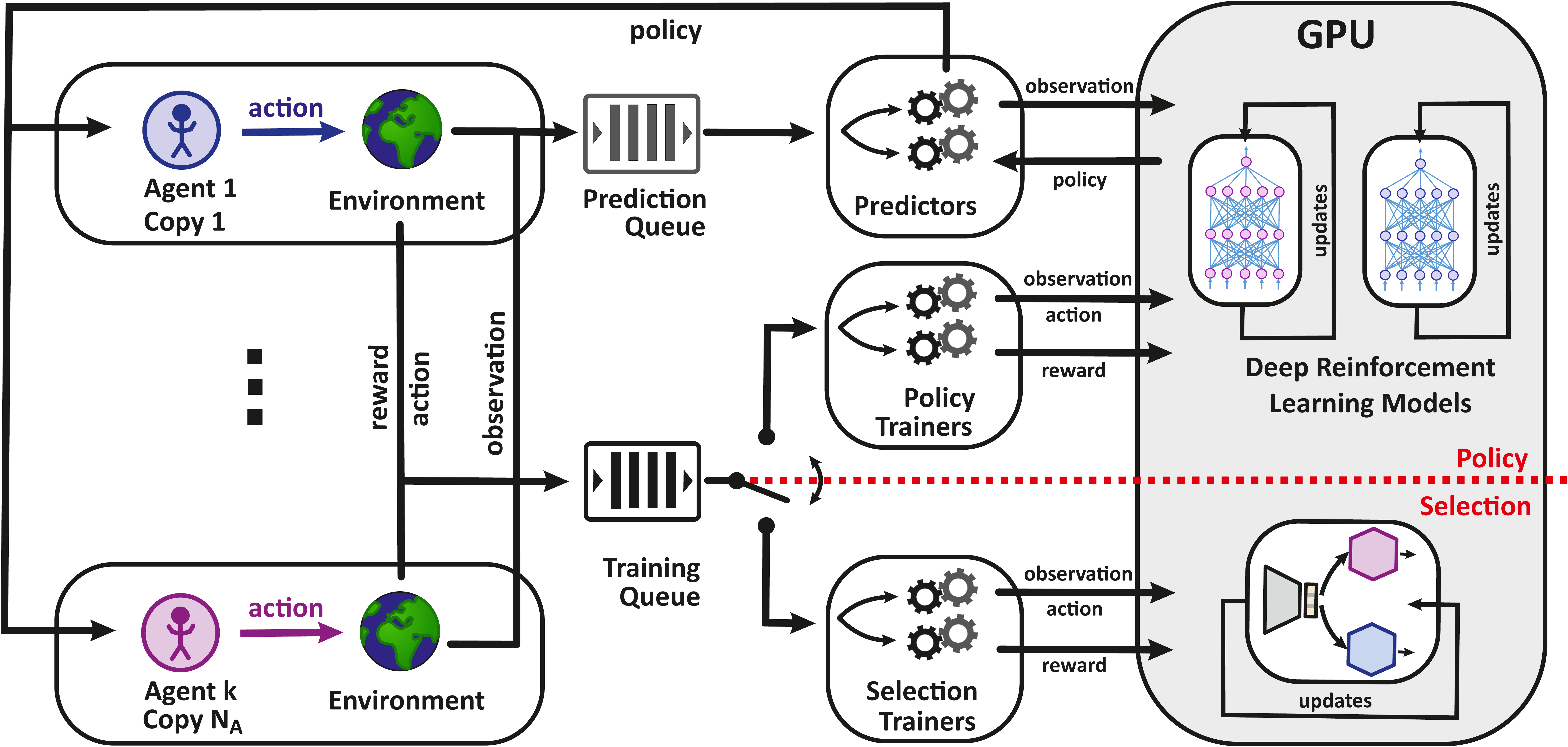} 
\caption{\textbf{Architecture for representation learning in reinforcement learning settings.}
We store neural network models in the shared memory of a graphics processing unit (GPU). As in Ref.~\cite{babaeizadeh_2017_reinforcement}, we make use of an asynchronous approach to reinforcement learning. 
$N_A$ copies for each of $k$ different agents interact with copies of the environment. Observations are queued and transferred to the GPU by prediction processes which also distribute policies, returned by the GPU, to the agents. Batches of observations, actions and rewards are queued and transferred to the GPU by trainer processes for updating the neural networks. On the GPU, batches of observations obtained from predictors are evaluated with deep energy-based projective simulation models~\cite{jerbi_2019_framework} to obtain a policy, and batches from policy trainers are used to update the model via the loss in Eq.\ (\ref{eq:training_dps}).
Everything above the red dotted line concerns the training of the reinforcement learning agents' policy analogous to Ref.~\cite{babaeizadeh_2017_reinforcement}. Below the dotted line, we depict our architecture which is trained by predicting discounted rewards obtained by trained reinforcement learning agents (see Appendix~\ref{appendix:predicting_rl}). We allow switching between training the policy and training the representation (i.e., selection of latent neurons). From training the policy to learning the representation, the training data changes only slightly. Importantly, in both cases, the data can  be created on-line by reinforcement learning agents.
}
\label{fig:rl_architecture_gpu}
\end{figure*}
In this appendix, we give the details for the architecture that has been used to factorise the abstract representation of a reinforcement learning environment. The code has been made available at \url{https://github.com/HendrikPN/reinforced_scinet}. 
For convenience, we repeat the training procedure here:
\begin{enumerate}
\item[(i)]{Train reinforcement learning agents.}
\item[(ii)]{Generate training data for representation learning from reinforcement learning agents (see Appendix~\ref{appendix:policy_data}).}
\item[(iii)]{Train encoders with decoders on training data to learn an abstract representation (see Appendix~\ref{appendix:proof_policy_data}).}
\end{enumerate}
The whole procedure is encompassed by a single algorithm (see Fig.~\ref{fig:rl_architecture_gpu}).

\subsection{Asynchronous reinforcement and representation learning}
Due to the highly parallelisable setting, we make use of asynchronous methods for reinforcement learning~\cite{babaeizadeh_2017_reinforcement}. That is, at all times, we have stored the neural network models in the shared memory of a graphics processing unit (GPU). Both, predicting and training, are therefore outsourced to the GPU while interactions of various agents with their environments are happening in parallel on central processing units (CPUs). The interface between the GPU and CPU is provided by two main processes which are assigned their own threads on CPUs, \emph{predictor}\footnote{Here we adopt the notation from Ref.~\cite{babaeizadeh_2017_reinforcement}. That is, the predictor processes used here are not related to the prediction process associated with decoders in the main text.} and \emph{training} processes. Predictor processes get observations from a \emph{prediction queue} and batch them in order to transfer them to the GPU where a forward pass of the deep reinforcement learning model is performed to obtain the policies (i.e., probability distributions over actions) which are redistributed to the respective agents. Training processes batch training data as appropriate for the learning model in the same way as predictors batch observations. This data is transferred to the GPU to update the neural network. In our case, we need to be able to switch between two such training processes. One for training a policy as in Ref.~\cite{babaeizadeh_2017_reinforcement} and as required by step~(i) of our training procedure, and one for representation learning as required by step~(iii). Interestingly, the training data which is used by the policy trainers in step~(i) is very similar to the training data which is used by the selection trainers in step~(iii). Therefore, in the transition from step~(i) to~(iii), we just have to slightly alter the data which is sent to the training queue as required by the algorithm in Sec.~\ref{appendix:policy_data}. Note that the similarity of the training data for the two training processes is due to the specific deep reinforcement learning model under consideration as described in the following section.
For further details on the implementation of asynchronous reinforcement learning methods on GPUs see Ref.~\cite{babaeizadeh_2017_reinforcement}.\\

 \begin{figure*}[ht!] 
\centering
\includegraphics[width=0.35\textwidth]{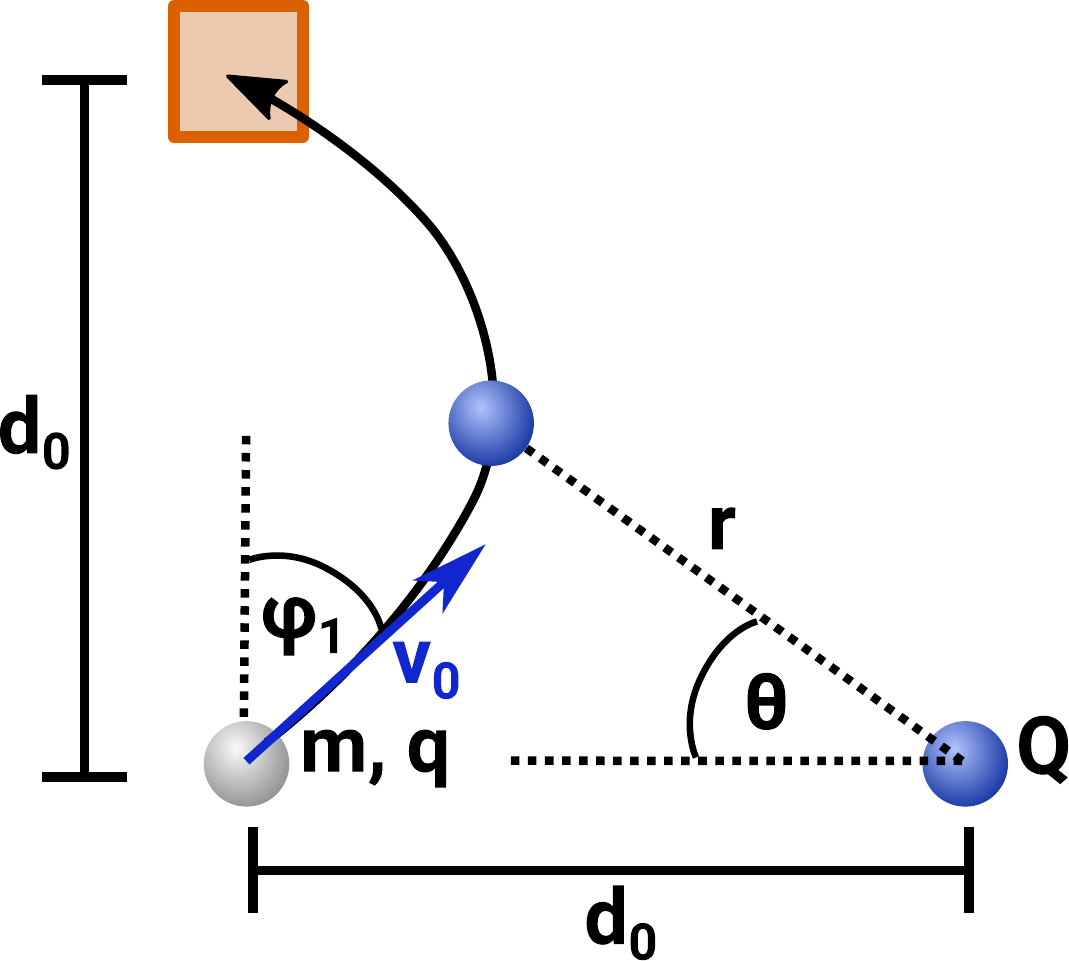} 
\caption{ \textbf{Setup and variable names for a charged mass being shot into a hole.} A charged particle with mass $m$ and charge $q$ moves in the electrostatic field generated by another charge $Q$ at a fixed position. The initial conditions are given by the velocity $v_0$ and the angle $\phi$. We want to determine the value for $\phi$ that will result in the particle landing in the target hole, given a velocity $v_0$.} 
\label{fig:charged_masses_appendix}
\end{figure*}

 \subsection{Deep energy-based projective simulation model}
The deep learning model used for the numerical results obtained here is a deep energy-based projective simulation (DPS) model as first presented in Ref.~\cite{jerbi_2019_framework}. We chose this model because it allows us to easily switch between training the policy and training the decoders since the training data is almost the same for both. In fact, besides different initial biases and network sizes, the models used as reinforcement learning agents and the models used for decoders are the same. 

The DPS model predicts so-called $h$-values $h(o,a)$ given an observation $o$ and action $a$.  The loss function aims to minimise the distance between the current $h$-value $h_t(o,a)$ and a target $h$-value $h_t^{\mathrm{tar}}(o,a)$ at time $t$, given as
\begin{align}
\mathcal{L}=| h_t(o,a) - h_t^{\mathrm{tar}}(o,a)|.\label{eq:training_dps}
\end{align}
Note that we are free to choose other loss functions such as the mean square error, or a Huber loss. 
We want the current $h$-value to be updated such that it maximises the future expected reward. Approximating this reward at time $t$ for a given discount factor, we write
\begin{align*}
R_t=\sum_{j=1}^{l_{\text{max}}}(1-\eta)^{j-1} r_{t+j}
\end{align*}
where $\eta\in(0,1]$ is the so-called \emph{glow} parameter accounting for the discount of rewards $r_{t+j}$ obtained after observing $o$ and taking action $a$ at time $t$ up to a temporal horizon $l_{\text{max}}$.
The target $h$-value can then be associated with this discounted reward as follows,
\begin{align*}
h_t^{\mathrm{tar}}(o,a)=(1-\gamma_{\text{PS}})h_t(o,a) + R_t,
\end{align*}
where $\gamma_{\text{PS}}\in [0,1)$ is the so-called \emph{forgetting} parameter used for regularisation.
The $h$-values are used to derive a policy through the softmax function,
\begin{align*}
\pi(a|o)=\frac{e^{\beta h(o,a)}}{\sum_{a'}e^{\beta h(o,a')}},
\end{align*}
where $\beta>0$ is an inverse temperature parameter which governs the drive for exploration versus exploitation. 
The tabular approach to projective simulation has been proven to converge to an optimal policy in the limit of infinitely many interactions with certain MDPs~\cite{clausen_2019_on} and has shown to perform as good as standard approaches to reinforcement learning on benchmarking tasks~\cite{melnikov2018benchmarking}. For a detailed description and motivation of the DPS model we refer to Ref.~\cite{jerbi_2019_framework}.

Note that the training data required to define the loss in Eq.\ (\ref{eq:training_dps}) consists of tuples containing observations, actions and discounted rewards $(o,a,R)$. Since this is in line with the training data required for training the decoders as described in Appendix~\ref{appendix:policy_data}, this model is particularly well suited for the combination with representation learning as introduced in this paper.

\section{Classical mechanics derivation for charged masses}
In this section, we provide the analytic solution to the charged masses example in Sec.~\ref{sec:charged_masses} that we use to evaluate the cost function for training the neural networks. This is a fairly direct application of the generic Kepler problem, but we include the derivation for the sake of completeness. We use the notation of Ref.~\cite{tong_classical_mechanics}.

The setup we consider is shown in Fig. \ref{fig:charged_masses_appendix}. Our goal is to derive a function $v_0(\phi)$ that, for fixed $q, Q, d_0$ and given $\phi$, outputs an initial velocity for the left mass such that the mass will reach the hole. Introducing the inverse radial coordinate $u = \frac{1}{r}$, the orbit $r(\theta)$ of the left mass obeys the following differential equation (see e.g., Ref.~\cite[Sec.~4.3]{tong_classical_mechanics}): 
\begin{equation} \label{eqn:orbit_ode}
\frac{d^2 u}{d \theta^2} + u = \frac{k}{l^2} \,,
\end{equation}
with the constant 
\begin{equation}
k = \frac{- q Q}{4 \pi \epsilon_0 m}
\end{equation}
and the mass-normalised angular momentum
\begin{equation}
l = r^2 \frac{d \theta}{d t} \,.
\end{equation}
This is a conserved quantity and we can determine it from the initial condition of the problem
\begin{equation}
l = d_0 v_0 \cos \phi \,.
\end{equation}
The general solution to Eq.~(\ref{eqn:orbit_ode}) is given by 
\begin{equation}
u = A \cos(\theta - \theta_0) + \frac{k}{l^2} \,,
\end{equation}
where $A$ and $\theta_0$ are constants to be determined from the initial conditions. The initial conditions are 
\begin{equation}
r(\theta = 0) = \frac{1}{A \cos(\theta_0) + \frac{k}{l^2}} = d_0 \,, \\
\end{equation}
\begin{equation}
\left. \frac{d r}{d \theta} \right\vert_{\theta = 0} = \frac{- A \sin\theta_0}{\left(A \cos \theta_0 + \frac{k}{l^2}\right)^2} \frac{v_0 \cos \phi}{d_0} = v_0 \sin \phi \,. 
\end{equation}
Combining these yields 
\begin{align}
A \cos \theta_0 &= \frac{1}{d_0} - \frac{k}{l^2} \,, \label{eqn:init_cond1} \\
A \sin \theta_0 &= - \frac{1}{d_0} \tan \phi \,. \label{eqn:init_cond2}
\end{align}
The condition that the mass reaches the hole is expressed in terms of $r(\theta)$ as follows: 
\begin{equation}
r\left( \theta = \frac{\pi}{4} \right) = \frac{1}{A \cos(\frac{\pi}{4} - \theta_0) + \frac{k}{l^2}} = \sqrt{2} d_0 \,.
\end{equation}
Using $\cos(\pi/4 - \theta_0) = \cos(\theta_0) / \sqrt{2} + \sin(\theta_0) / \sqrt{2}$ and the definition of $l$ as well as Eqs.~(\ref{eqn:init_cond1}) and~(\ref{eqn:init_cond2}), we can solve this for $v_0$: 
\begin{equation}
v_0^2 = \frac{(\sqrt{2} - 1) k}{d_0} \frac{1}{\cos \phi \sin \phi} \,.
\end{equation}

Restricting $\phi$ to a suitably small interval, this function is injective and has a well-defined inverse $\phi(v_0)$. The neural network has to compute this inverse from operational input data. To generate valid question-answer pairs, we evaluate $v_0(\phi)$ on a large number of randomly chosen $\phi$ (inside the interval where the function is injective).


\newpage
\bibliographystyle{stdWithTitle}
\bibliography{odr}

\begin{thebibliography}{10}

\bibitem{nielsenneural}
M.~A. Nielsen,
\newblock Neural networks and deep learning, 2018.

\bibitem{lecun_deep_2015}
Y.~LeCun, Y.~Bengio, and G.~Hinton,
\newblock ``Deep learning'',
  \href{https://doi.org/10.1038/nature14539}{{\protect\JournalTitle{Nature}}
  521, 436} (2015).

\bibitem{silver_mastering_2016}
D.~Silver {\em et~al.},
\newblock ``Mastering the game of {Go} with deep neural networks and tree
  search'',
  \href{https://doi.org/10.1038/nature16961}{{\protect\JournalTitle{Nature}}
  529, 484} (2016).

\bibitem{dunjko_machine_2018}
V.~Dunjko and H.~J. Briegel,
\newblock ``Machine learning \& artificial intelligence in the quantum domain:
  a review of recent progress'',
  \href{https://doi.org/10.1088/1361-6633/aab406}{{\protect\JournalTitle{Reports
  on Progress in Physics}} 81, 074001} (2018).

\bibitem{roscher_explainable_2019}
R.~Roscher, B.~Bohn, M.~F. Duarte, and J.~Garcke,
\newblock ``Explainable {Machine} {Learning} for {Scientific} {Insights} and
  {Discoveries}'', \emph{Preprint} (2019),
  \href{http://arxiv.org/abs/1905.08883}{arXiv: 1905.08883}.

\bibitem{carleo_machine_2019}
G.~Carleo {\em et~al.},
\newblock ``Machine learning and the physical sciences'',
  \href{https://doi.org/10.1103/RevModPhys.91.045002}{{\protect\JournalTitle{Reviews
  of Modern Physics}} 91, 045002} (2019).

\bibitem{Bates_humans_2015}
C.~Bates, P.~W. Battaglia, I.~Yildirim, and J.~B. Tenenbaum,
\newblock ``Humans predict liquid dynamics using probabilistic simulation'',
  {\protect\JournalTitle{Proceedings of the 37th Annual Conference of the
  Cognitive Science Society}} 1, 172 (2015).

\bibitem{Wu_galileo_2015}
J.~Wu, I.~Yildirim, J.~J. Lim, B.~Freeman, and J.~Tenenbaum,
\newblock Galileo: Perceiving physical object properties by integrating a
  physics engine with deep learning. In: {\em Advances in Neural Information
  Processing Systems 28},  (Curran Associates, Inc., 2015), pp. 127--135.

\bibitem{Bramley2018}
N.~R. Bramley, T.~Gerstenberg, J.~B. Tenenbaum, and T.~M. Gureckis,
\newblock ``{Intuitive experimentation in the physical world}'',
  \href{https://doi.org/10.1016/j.cogpsych.2018.05.001}{{\protect\JournalTitle{Cognitive
  Psychology}} 105, 9} (2018).

\bibitem{rempe_learning_2019}
D.~Rempe, S.~Sridhar, H.~Wang, and L.~J. Guibas,
\newblock ``Learning {Generalizable} {Physical} {Dynamics} of 3D {Rigid}
  {Objects}'', \emph{Preprint} (2019),
  \href{http://arxiv.org/abs/1901.00466}{arXiv: 1901.00466}.

\bibitem{kissner_adding_2019}
M.~Kissner and H.~Mayer,
\newblock ``Adding {Intuitive} {Physics} to {Neural}-{Symbolic} {Capsules}
  {Using} {Interaction} {Networks}'', \emph{Preprint} (2019),
  \href{http://arxiv.org/abs/1905.09891}{arXiv: 905.09891}.

\bibitem{ehrhardt_unsupervised_2018}
S.~Ehrhardt, A.~Monszpart, N.~Mitra, and A.~Vedaldi,
\newblock ``Unsupervised {Intuitive} {Physics} from {Visual} {Observations}'',
  \emph{Preprint} (2018), \href{http://arxiv.org/abs/1805.05086}{arXiv:
  1805.05086}.

\bibitem{ye_interpretable_2018}
T.~Ye, X.~Wang, J.~Davidson, and A.~Gupta,
\newblock ``Interpretable {Intuitive} {Physics} {Model}'', \emph{Preprint}
  (2018), \href{http://arxiv.org/abs/1808.10002}{arXiv: 1808.10002}.

\bibitem{zheng_unsupervised_2018}
D.~Zheng, V.~Luo, J.~Wu, and J.~B. Tenenbaum,
\newblock ``Unsupervised {learning} of {latent} {physical} {properties} {using}
  {perception}-{prediction} {networks}'', \emph{Preprint} (2018),
  \href{http://arxiv.org/abs/1807.09244}{arXiv: 1807.09244}.

\bibitem{raban_2018_discovering}
R.~Iten, T.~Metger, H.~Wilming, L.~del Rio, and R.~Renner,
\newblock ``{Discovering physical concepts with neural networks}'',
  \emph{Preprint} (2018), \href{http://arxiv.org/abs/1807.10300}{arXiv:
  1807.10300}.

\bibitem{melnikov_active_2018}
A.~A. Melnikov {\em et~al.},
\newblock ``Active learning machine learns to create new quantum experiments'',
  \href{https://doi.org/10.1073/pnas.1714936115}{{\protect\JournalTitle{Proceedings
  of the National Academy of Sciences}} 115, 1221} (2018).

\bibitem{ried_2019_how}
K.~Ried, B.~Eva, T.~M{\"{u}}ller, and H.~Briegel,
\newblock ``How a minimal learning agent can infer the existence of unobserved
  variables in a complex environment'', \emph{Preprint} (2019),
  \href{http://arxiv.org/abs/1910.06985}{arXiv: 1910.06985}.

\bibitem{briegel_2012_on}
H.~J. Briegel,
\newblock ``On creative machines and the physical origins of freedom'',
  \href{https://doi.org/10.1038/srep00522}{{\protect\JournalTitle{Scientific
  Reports}} 2, 522} (2012).

\bibitem{wu_toward_2018}
T.~Wu and M.~Tegmark,
\newblock ``Toward an {AI} {Physicist} for {Unsupervised} {Learning}'',
  \emph{Preprint} (2018), \href{http://arxiv.org/abs/1810.10525}{arXiv:
  1810.10525}.

\bibitem{de_simone_guiding_2019}
A.~De~Simone and T.~Jacques,
\newblock ``Guiding new physics searches with unsupervised learning'',
  \href{https://doi.org/10.1140/epjc/s10052-019-6787-3}{{\protect\JournalTitle{The
  European Physical Journal C}} 79, 289} (2019).

\bibitem{dagnolo_learning_2019}
R.~T. D'Agnolo and A.~Wulzer,
\newblock ``Learning {New} {Physics} from a {Machine}'',
  \href{https://doi.org/10.1103/PhysRevD.99.015014}{{\protect\JournalTitle{Physical
  Review D}} 99, 015014} (2019).

\bibitem{rahaman_2019_learning}
N.~Rahaman, S.~Wolf, A.~Goyal, R.~Remme, and Y.~Bengio,
\newblock ``Learning the Arrow of Time'', \emph{Preprint} (2019),
  \href{http://arxiv.org/abs/1907.01285}{arXiv: 1907.01285}.

\bibitem{Lake2016}
B.~M. Lake, T.~D. Ullman, J.~B. Tenenbaum, and S.~J. Gershman,
\newblock ``Building Machines That Learn and Think Like People'',
  \href{https://doi.org/10.1017/S0140525X16001837}{{\protect\JournalTitle{Behavioral
  and Brain Sciences}}, 1} (2016).

\bibitem{Bengio2012}
Y.~Bengio, A.~Courville, and P.~Vincent,
\newblock ``Representation learning: a review and new perspectives'',
  \href{https://doi.org/10.1145/1756006.1756025}{{\protect\JournalTitle{IEEE
  Transactions on Pattern Analysis and Machine Intelligence}} 35} (2012).

\bibitem{hinton_reducing_2006}
G.~E. Hinton and R.~R. Salakhutdinov,
\newblock ``Reducing the dimensionality of data with neural networks'',
  \href{https://doi.org/10.1126/science.1127647}{{\protect\JournalTitle{Science}}
  313, 504} (2006).

\bibitem{Higgins2017}
I.~Higgins {\em et~al.},
\newblock ``beta-VAE: learning basic visual concepts with a constrained
  variational framework'',
  \href{https://openreview.net/references/pdf?id=Sy2fzU9gl}{{\protect\JournalTitle{ICLR}}}
  (2017).

\bibitem{RusselNorvig2003}
S.~Russel and P.~Norvig,
\newblock {\em Artificial Intelligence - A Modern Approach},  (Prentice Hall,
  New Jersey, 2010).

\bibitem{Gamel_2016}
O.~Gamel,
\newblock ``Entangled Bloch spheres: Bloch matrix and two-qubit state space'',
  \href{https://doi.org/10.1103/physreva.93.062320}{{\protect\JournalTitle{Physical
  Review A}} 93} (2016).

\bibitem{garon_2015_visualizing}
A.~Garon, R.~Zeier, and S.~J. Glaser,
\newblock ``Visualizing operators of coupled spin systems'',
  \href{https://doi.org/10.1103/PhysRevA.91.042122}{{\protect\JournalTitle{Phys.
  Rev. A}} 91, 042122} (2015).

\bibitem{patnaik_2013_modeling}
S.~Patnaik, I.~K. Sethi, and X.~Li,
\newblock {\em Modeling and Optimization in Science and Technologies},
  (Springer Nature, 2013--2020).

\bibitem{sutton_1998_reinforcement}
R.~S. Sutton and A.~G. Barto,
\newblock {\em {Reinforcement Learning: An Introduction}},  (MIT press,
  Cambridge, 1998).

\bibitem{chen_2018_isolating}
T.~Q. Chen, X.~Li, R.~B. Grosse, and D.~Duvenaud,
\newblock ``Isolating Sources of Disentanglement in Variational Autoencoders'',
  \emph{Preprint} (2018), \href{http://arxiv.org/abs/1802.04942}{arXiv:
  1802.04942}.

\bibitem{Kim2018}
H.~Kim and A.~Mnih,
\newblock ``Disentangling by factorising'', \emph{Preprint} (2018),
  \href{https://arxiv.org/abs/1802.05983}{arXiv: 1802.05983}.

\bibitem{thomas_2018_disentangling}
V.~Thomas {\em et~al.},
\newblock ``Disentangling the independently controllable factors of variation
  by interacting with the world'', \emph{Preprint} (2018),
  \href{http://arxiv.org/abs/1802.09484}{arXiv: 1802.09484}.

\bibitem{francois_lavet_combined_2018}
V.~Fran{\c{c}}ois{-}Lavet, Y.~Bengio, D.~Precup, and J.~Pineau,
\newblock ``Combined Reinforcement Learning via Abstract Representations'',
  \href{https://doi.org/10.1609/aaai.v33i01.33013582}{{\protect\JournalTitle{The
  Thirty-Third {AAAI} Conference on Artificial Intelligence, {AAAI} 2019}},
  3582} (2019).

\bibitem{bengio_2017_consciousness}
Y.~Bengio,
\newblock ``The Consciousness Prior'', \emph{Preprint} (2017),
  \href{http://arxiv.org/abs/1709.08568}{arXiv: 1709.08568}.

\bibitem{lesort_2018_state}
T.~Lesort, N.~Díaz-Rodríguez, J.-F. Goudou, and D.~Filliat,
\newblock ``State representation learning for control: An overview'',
  \href{https://doi.org/https://doi.org/10.1016/j.neunet.2018.07.006}{{\protect\JournalTitle{Neural
  Networks}} 108, 379 } (2018).

\bibitem{bengio_2017_independently}
E.~Bengio, V.~Thomas, J.~Pineau, D.~Precup, and Y.~Bengio,
\newblock ``Independently Controllable Features'', \emph{Preprint} (2017),
  \href{http://arxiv.org/abs/1703.07718}{arXiv: 1703.07718}.

\bibitem{jonschkowski_2015_learning}
R.~Jonschkowski and O.~Brock,
\newblock ``Learning state representations with robotic priors'',
  \href{https://doi.org/10.1007/s10514-015-9459-7}{{\protect\JournalTitle{Autonomous
  Robots}} 39, 407} (2015).

\bibitem{jaderberg_2017_reinforcement}
M.~Jaderberg {\em et~al.},
\newblock ``Reinforcement Learning with Unsupervised Auxiliary Tasks'',
  \href{https://openreview.net/forum?id=SJ6yPD5xg}{{\protect\JournalTitle{5th
  International Conference on Learning Representations, {ICLR} 2017, Toulon,
  France, April 24-26, 2017, Conference Track Proceedings}}} (2017).

\bibitem{mnih_2015_human}
V.~Mnih {\em et~al.},
\newblock ``Human-level control through deep reinforcement learning'',
  \href{https://doi.org/10.1038/nature14236}{{\protect\JournalTitle{Nature}}
  518, 529} (2015).

\bibitem{zahavy_2016_graying}
T.~Zahavy, N.~B. Zrihem, and S.~Mannor,
\newblock ``Graying the Black Box: Understanding DQNs'',
  \href{http://proceedings.mlr.press/v48/zahavy16.html}{{\protect\JournalTitle{Proceedings
  of the 33rd International Conference on International Conference on Machine
  Learning - Volume 48}}, 1899} (2016).

\bibitem{briegel_2012_projective}
H.~J. Briegel and G.~De~las Cuevas,
\newblock ``Projective simulation for artificial intelligence'',
  \href{https://doi.org/10.1038/srep00400}{{\protect\JournalTitle{Scientific
  Rep.}} 2, 400} (2012).

\bibitem{nautrup_2018_optimizing}
H.~Poulsen~Nautrup, N.~Delfosse, V.~Dunjko, H.~J. Briegel, and N.~Friis,
\newblock ``Optimizing {Q}uantum {E}rror {C}orrection {C}odes with
  {R}einforcement {L}earning'',
  \href{https://doi.org/10.22331/q-2019-12-16-215}{{\protect\JournalTitle{{Quantum}}}
  3, 215} (2019).

\bibitem{wallnofer_2019_machine}
J.~Walln{\"{o}}fer, A.~A. Melnikov, W.~D{\"{u}}r, and H.~Briegel,
\newblock ``Machine learning for long-distance quantum communication'',
  \emph{Preprint} (2019), \href{http://arxiv.org/abs/1904.10797}{arXiv:
  1904.10797}.

\bibitem{hangl_2016_robotic}
S.~Hangl, E.~Ugur, S.~Szedm{\'{a}}k, and J.~H. Piater,
\newblock ``Robotic playing for hierarchical complex skill learning'',
  \href{https://doi.org/10.1109/IROS.2016.7759434}{{\protect\JournalTitle{{IEEE/RSJ}
  International Conference on Intelligent Robots and Systems, {IROS} 2016,
  Daejeon, South Korea}}, 2799} (2016).

\bibitem{hangl_2017_skill}
S.~Hangl, V.~Dunjko, H.~Briegel, and J.~H. Piater,
\newblock ``Skill Learning by Autonomous Robotic Playing using Active Learning
  and Creativity'', \emph{Preprint} (2017),
  \href{http://arxiv.org/abs/1706.08560}{arXiv: 1706.08560}.

\bibitem{ried_2019_modelling}
K.~Ried, T.~Müller, and H.~J. Briegel,
\newblock ``Modelling collective motion based on the principle of agency:
  General framework and the case of marching locusts'',
  \href{https://doi.org/10.1371/journal.pone.0212044}{{\protect\JournalTitle{PLOS
  ONE}} 14, 1} (2019).

\bibitem{melnikov_2017_projective}
A.~A. Melnikov, A.~Makmal, V.~Dunjko, and H.~J. Briegel,
\newblock ``Projective simulation with generalization'',
  \href{https://doi.org/10.1038/s41598-017-14740-y}{{\protect\JournalTitle{Scientific
  Reports}} 7, 14430} (2017).

\bibitem{falmini_2019_photonic}
F.~Flamini {\em et~al.},
\newblock ``Photonic architecture for reinforcement learning'', \emph{Preprint}
  (2019), \href{http://arxiv.org/abs/1907.07503}{arXiv: 1907.07503}.

\bibitem{kingman_2013_auto}
D.~P. Kingma and M.~Welling,
\newblock ``Auto-encoding variational bayes'', \emph{Preprint} (2013),
  \href{https://arxiv.org/abs/1312.6114}{arXiv: 1312.6114}.

\bibitem{paris_quantum_2004}
M.~Paris and J.~Reh{\'{a}}{\v{c}}ek (editors),
\newblock \href{https://doi.org/10.1007/b98673}{{\em {Quantum State
  Estimation}}}, Lecture Notes in Physics (Springer, Berlin, Heidelberg, 2004).

\bibitem{babaeizadeh_2017_reinforcement}
M.~Babaeizadeh, I.~Frosio, S.~Tyree, J.~Clemons, and J.~Kautz,
\newblock ``Reinforcement Learning through Asynchronous Advantage Actor-Critic
  on a {GPU}'',
  \href{https://openreview.net/forum?id=r1VGvBcxl}{{\protect\JournalTitle{5th
  International Conference on Learning Representations, {ICLR} 2017, Toulon,
  France}}} (2017).

\bibitem{jerbi_2019_framework}
S.~Jerbi, H.~P. Nautrup, L.~M. Trenkwalder, H.~Briegel, and V.~Dunjko,
\newblock ``A framework for deep energy-based reinforcement learning with
  quantum speed-up'', \emph{Preprint} (2019),
  \href{http://arxiv.org/abs/1910.12760}{arXiv: 1910.12760}.

\bibitem{erhard_2018_twisted}
M.~Erhard, R.~Fickler, M.~Krenn, and A.~Zeilinger,
\newblock ``Twisted photons: new quantum perspectives in high dimensions'',
  \href{https://doi.org/10.1038/lsa.2017.146}{{\protect\JournalTitle{Light:
  Science \& Applications}} 7, 17146} (2018).

\bibitem{krenn_automated_2016}
M.~Krenn, M.~Malik, R.~Fickler, R.~Lapkiewicz, and A.~Zeilinger,
\newblock ``Automated search for new quantum experiments'',
  \href{https://doi.org/10.1103/PhysRevLett.116.090405}{{\protect\JournalTitle{Phys.
  Rev. Lett.}} 116, 090405} (2016).

\bibitem{krenn_entanglement_2017}
M.~Krenn, A.~Hochrainer, M.~Lahiri, and A.~Zeilinger,
\newblock ``Entanglement by path identity'',
  \href{https://doi.org/10.1103/PhysRevLett.118.080401}{{\protect\JournalTitle{Phys.
  Rev. Lett.}} 118, 080401} (2017).

\bibitem{krenn_quantum_2017}
M.~Krenn, X.~Gu, and A.~Zeilinger,
\newblock ``Quantum experiments and graphs: Multiparty states as coherent
  superpositions of perfect matchings'',
  \href{https://doi.org/10.1103/PhysRevLett.119.240403}{{\protect\JournalTitle{Phys.
  Rev. Lett.}} 119, 240403} (2017).

\bibitem{kaelbling_1998_planning}
L.~P. Kaelbling, M.~L. Littman, and A.~R. Cassandra,
\newblock ``Planning and acting in partially observable stochastic domains'',
  \href{https://doi.org/https://doi.org/10.1016/S0004-3702(98)00023-X}{{\protect\JournalTitle{Artificial
  Intelligence}} 101, 99 } (1998).

\bibitem{zdeborova_2017_new}
L.~Zdeborov{\'a},
\newblock ``New tool in the box'',
  \href{https://doi.org/10.1038/nphys4053}{{\protect\JournalTitle{Nature
  Physics}} 13, 420} (2017).

\bibitem{olah_2018_the}
C.~Olah {\em et~al.},
\newblock ``The Building Blocks of Interpretability'',
  \href{https://doi.org/10.23915/distill.00010}{{\protect\JournalTitle{Distill}}
  3:e10} (2018).

\bibitem{krenn_selfies_2018}
M.~Krenn, F.~H{\"{a}}se, A.~Nigam, P.~Friederich, and A.~Aspuru{-}Guzik,
\newblock ``{SELFIES:} a robust representation of semantically constrained
  graphs with an example application in chemistry'', \emph{Preprint} (2019),
  \href{http://arxiv.org/abs/1905.13741}{arXiv: 1905.13741}.

\bibitem{mnih_2016_asynchronous}
V.~Mnih {\em et~al.},
\newblock ``Asynchronous Methods for Deep Reinforcement Learning'',
  \href{http://proceedings.mlr.press/v48/mniha16.html}{{\protect\JournalTitle{Proceedings
  of the 33nd International Conference on Machine Learning, {ICML} 2016, New
  York City, NY, USA}}, 1928} (2016).

\bibitem{abadi_2015_tensorflow}
M.~Abadi {\em et~al.},
\newblock {TensorFlow}: Large-scale machine learning on heterogeneous systems,
  2015,
\newblock \url{https://www.tensorflow.org/}.

\bibitem{paszke_2017_automatic}
A.~Paszke {\em et~al.},
\newblock ``Automatic Differentiation in {PyTorch}'',
  {\protect\JournalTitle{NIPS Autodiff Workshop}} (2017).

\bibitem{clausen_2019_on}
J.~Clausen, W.~L. Boyajian, L.~M. Trenkwalder, V.~Dunjko, and H.~J. Briegel,
\newblock ``On the convergence of projective-simulation-based reinforcement
  learning in Markov decision processes'', \emph{Preprint} (2019),
  \href{http://arxiv.org/abs/1910.11914}{arXiv: 1910.11914}.

\bibitem{melnikov2018benchmarking}
A.~A. Melnikov, A.~Makmal, and H.~J. Briegel,
\newblock ``Benchmarking projective simulation in navigation problems'',
  \href{https://doi.org/10.1109/ACCESS.2018.2876494}{{\protect\JournalTitle{IEEE
  Access}} 6, 64639} (2018).

\bibitem{tong_classical_mechanics}
D.~Tong,
\newblock ``Lectures on Dynamics and Relativity'', \emph{Preprint},
  \href{http://www.damtp.cam.ac.uk/user/tong/relativity.html}{arXiv:
  1903.10563}.

\end{thebibliography}

\end{document}